  \def\namedlabel#1#2{\begingroup
	#2%
	\def\@currentlabel{#2}%
    	\phantomsection\label{#1}\endgroup
  }   
  \definecolor{UltramarineBlue}{RGB}{18,10,143}
  \definecolor{QuSoft}{RGB}{179, 16, 32}
  \newcommand{\ignore}[1]{}	
  \newcommand{\nn}{\nonumber}
 \newcommand{\ketbra}[2]{\ket{#1}\!\bra{#2}}
  \DeclareMathOperator{\Tr}{Tr}
  \newcommand{\ie}{{i.e.}}
  \newtheorem{theorem}{Theorem}[section]
  \newtheorem{proposition}[theorem]{Proposition}
\renewcommand*{\@fnsymbol}[1]{\ensuremath{\ifcase#1\or *\or \mathparagraph\or \ddagger\or
    \mathsection \or \dagger\or \|\or **\or \dagger\dagger
    \or \ddagger\ddagger \else\@ctrerr\fi}}
\begin{document}
\numberwithin{equation}{section}

\title{\textbf{On the Role of Quantum Communication and Loss in Attacks on Quantum Position Verification}}

\author[1]{Rene Allerstorfer\footnote{Email: \href{mailto:rene.allerstorfer@cwi.nl}{rene.allerstorfer@cwi.nl}}}
\author[1,2]{Harry Buhrman\footnote{Email: \href{mailto:harry.buhrman@cwi.nl}{harry.buhrman@cwi.nl}}}

\author[2]{Florian Speelman\footnote{Email: \href{mailto:f.speelman@uva.nl}{f.speelman@uva.nl}}}
\author[1]{Philip Verduyn Lunel\footnote{Email: \href{mailto:philip.verduyn.lunel@cwi.nl}{philip.verduyn.lunel@cwi.nl}}}
\affil[1]{QuSoft, CWI Amsterdam, Science Park 123, 1098 XG Amsterdam, The Netherlands}
\affil[2]{QuSoft, University of Amsterdam, Science Park 904, 1098 XH Amsterdam, The Netherlands}

\date{Dated: \today}
\maketitle

\begin{abstract}
\noindent We study the role of quantum communication in attacks on quantum position verification. In this work, we construct the first known example of a QPV protocol that is provably secure against unentangled attackers restricted to classical communication, but can be perfectly attacked by local operations and a single round of simultaneous quantum communication indicating that allowing for quantum communication may break security. We also show that any protocol secure against classical communication can be transformed into a protocol secure against quantum communication. We further show, using arguments based on the monogamy of entanglement, that the task of Bell state discrimination cannot be done locally with a single round of quantum communication, not even probabilistically (when we allow attackers to say loss sometimes), making this the first fully loss-tolerant QPV task secure against quantum communication attacks. Finally, we observe that any multi-round QPV protocol can be attacked with a linear amount of entanglement if the loss is high enough.
\end{abstract}


\section{Introduction}
Geographical position is an important contributor to trust---for example, a message which provably comes from a secure location in a government institution, has automatic credence to actually be sent by that government.
Position-based cryptography is the study of using position as a cryptographic credential.
The most basic task here is to certify someone's position, but this can be extended to messages that can only be read at a certain location, or to \emph{authenticating} that a message came (unaltered) from a certain location.

We will focus on the task of \emph{position verification}, which can be used as the building block for tasks like position-based authentication. For simplicity, the focus will be on the one-dimensional case, i.e.\ verifying one's position on a line, but the relevant ideas generalize readily to more dimensions. In our case, protocols will have the form of two \emph{verifiers}, $\mathsf{V_A}$ and $\mathsf{V_B}$, attempting to verify the location of a \emph{prover} $\mathsf{P}$.
An adversary to a scheme will take the form of a coalition of attackers, while the location of $\mathsf{P}$ is empty.
We'll use $\mathsf{A}$ (or Alice) for the attacker located between $\mathsf{V_A}$ and the location of $\mathsf{P}$, and $\mathsf{B}$ (or Bob) for the attacker location between the location of $\mathsf{P}$ and $\mathsf{V_B}$. 

It was shown by Chandran, Goyal, Moriarty, and Ostrovsky~\cite{chandran_position_2009} that without any additional assumptions, position verification is an impossible task to achieve classically.
The quantum study of quantum position verification (QPV) was first initiated by Beausoleil, Kent, Munro, and Spiller resulting in a patent published in 2006 \cite{KentPatent2006}. The topic first appeared in the academic literature in 2010 ~\cite{Malaney2010, MalaneyNoisy2010}, followed by various proposals and ad-hoc attacks ~\cite{kent_quantum_2011,lau_insecurity_2011}.
A general attack on quantum protocols for this task was presented by Buhrman, Chandran, Fehr, Gelles, Goyal, Ostrovsky, and Schaffner~\cite{buhrman_position-based_2011}, requiring a doubly-exponential amount of entanglement.
This attack was further improved to requiring an exponential amount of entanglement by Beigi and K\"onig~\cite{beigi_simplified_2011} -- much more efficient but still impractically large. (See also~\cite{gao2013,dolev2019} for generalizations of such attacks to different settings, with similar entanglement scaling.)

Additionally, other protocols have been proposed \cite{kent_quantum_2011,chakraborty_practical_2015, unruh_quantum_2014,junge2021geometry,bluhm2021position}, that combine classical and quantum information in interesting ways, sometimes requiring intricate methods to attack~\cite{buhrman2013garden, speelman2016,olivo_breaking_2020}. 

In our contribution we present several results on the role of quantum communication in QPV attacks.

\paragraph{The role of quantum communication for attacks on QPV.}
Some works \cite{buhrman_near-optimal_2011,tomamichel_monogamy--entanglement_2013,buhrman2013garden,bluhm2021position}  attempt to lower bound the pre-shared entanglement required from attackers that are allowed a round of simultaneous \emph{quantum} communication, while other results, such as~\cite{beigi_simplified_2011,ribeiro_tight_2015,qi_loss-tolerant_2015,qi_free-space_2015,lim_loss-tolerant_2016,gonzales_bounds_2020,olivo_breaking_2020} assume attackers that are restricted to communicate only classically.

Even though quantum communication can potentially be simulated by teleportation, it is not immediately clear how to compare bounds between these two settings, especially in case where the exact size of the lower bound is of interest.\footnote{For instance, the bound of \cite{ribeiro_tight_2015} does not fully supersede  \cite{tomamichel_monogamy--entanglement_2013}, and therefore finding a tight lower bound for the parallel-repetition of the QPV$_{\textsf{BB84}}$ protocol against attackers that have quantum communication remains open.}
The simplest version of this question can be asked for unentangled attackers: If a (quantum-question, classical-reply) QPV protocol is secure against unentangled attackers that communicate classically, is that protocol also secure against unentangled attackers that are allowed to use quantum communication?

To that end, we present the following results:

\begin{itemize}
    \item First, we answer the above question in the negative: We construct a protocol that is provably secure against unentangled attackers that can use classical communication, but can be broken by a single round of simultaneous quantum communication. This shows that some care has to be taken when interpreting results that restrict to classical messages only.
    
    \item Interestingly, we are additionally able to show that our counter-example is in some sense artificial: Given a protocol that is secure against classical messages, but insecure when quantum communication is allowed, it is always possible to transform this protocol into one that is secure when quantum communication is allowed.
    
    This new protocol can be constructed from the given protocol by applying local maps to the messages from the verifiers $\mathsf{V_A}$, $\mathsf{V_B}$, without having to modify the output predicate. Our proof for this statement involves a recursive argument, where we view the states after quantum communication of a successful attack as the input messages to two new protocols. We then recursively consider an increasing number of new possible protocols, and use \emph{emergent classicality}~\cite{qi2020emergent} to show that a secure protocol of the required form has to exist.
    
    \item We proceed by considering the task of Bell state discrimination\footnote{Where the input is a randomly chosen Bell state.} and prove that this task cannot be done perfectly with only local operations and one round of simultaneous communication. The proof relies on new arguments based on the monogamy of entanglement. We consider a purified version of this task in the QPV setting (delaying the honest measurement to the end of the protocol) and show that the squashed entanglement \cite{christandl2004squashed} of the state $\rho_{V_A V_B}$, on which the honest Bell measurement is applied to, is upper bounded by $E_{sq}(V_A : V_B)_\rho \leq 1/2$. Hence attackers won't be able to perfectly predict the honest result. To get an explicit upper bound on the attack success probability with quantum communication $p_\text{succ}^\text{qc}$ we use the \textit{hashing bound} from \cite{devetak2005distillation} which lower bounds the squashed entanglement, allowing us to upper bound a parameter that leads to $p_\text{succ}^\text{qc} \leq 0.926$. We further improve this bound to $p_\text{succ}^\text{qc} \leq 3/4$ via a different argument based on the no-cloning theorem.
    
    We additionally show that even in the lossy scenario it remains that $p_\text{succ}^\text{qc}(\eta) < 1$ for any transmission rate $0 < \eta \leq 1$. This makes the task of Bell state discrimination, and by implication the QPV protocol based on the SWAP-test \cite{AllBuhSpeVer22_swap}, the first fully loss tolerant QPV protocol that remains secure in the setting where attackers are allowed quantum communication.
\end{itemize}

\noindent Finally, we present a result relating loss tolerance and entanglement attacks in QPV:

\begin{itemize}
    \item We observe that in a setting with loss any multi-round QPV protocol can be broken with only a linear amount of pre-shared entanglement if the loss rate is high enough. In that sense, creating a fully loss tolerant QPV protocol which requires superlinear entanglement (in the number of qubits involved) is impossible. This follows directly from a simple observation: if there is no limit to the loss, the adversaries can attempt quantum teleportation and guess the teleportation corrections, claiming `loss' if the guess is incorrect.
\end{itemize}

The aspects of loss and quantum communication are practically very relevant, since in realistic settings loss rates will be high and, although attackers are restricted to only one round of simultaneous communication due to the timing constraints of QPV, they could in principle be able to quantum communicate and this might give them an advantage.

\subsection{Structure of the paper}

In section \ref{sec:quantumcomm} we present the first QPV protocol that is provably secure against attackers restricted to quantum communication but broken by a single round of quantum communication. However, in section \ref{sec:splitting} we show that any protocol insecure against quantum communication but secure against classical communication can be transformed into a protocol secure against quantum communication. In section \ref{sec:QCproofagainstBellstatediscrimination} we show that the task of Bell state discrimination is secure against attackers who are allowed to use quantum communication. Extending on this result in section \ref{sec:LossyQuantumCommunicationattackbound} we also show that this is strictly secure if we allow attackers to also say loss beside quantum communication. Finally, in section \ref{sec:losstolerance} we make the observation that allowing loss for pre-shared entangled attackers allows for any protocol with high enough loss rate to be broken with linear entanglement.

\section{Preliminaries} \label{sec:prelim}
\subsection{Notation}
We denote parties in QPV protocols by letters \textsf{A}, \textsf{B}, etc. and their quantum registers as $A_1 \cdots A_n$, $B_1 \cdots B_n$ and so on, respectively. Sometimes we may refer to ``all registers party $\mathsf{X}$ holds'' just by \textsf{X}, giving expression like $\operatorname{Pos}(\mathsf{A} \otimes \mathsf{B})$, for example. Unless otherwise indicated, $\lVert \cdot \rVert_p$ is the usual $p$-norm. The diamond norm on quantum channels is denoted by $\lVert \cdot \rVert_\diamond$ and is defined as $\lVert \mathcal{C} \rVert_\diamond \coloneqq \max_{\rho} \lVert (\mathcal{C} \otimes \mathbbm{1}_k)(\rho) \rVert_1$ for quantum states $\rho$. Partial transposition of an operator $P$ with respect to party $\mathsf{B}$ is denoted $P^{T_\mathsf{B}}$. The set of PPT-measurements\footnote{I.e.\ sets of positive semi-definite operators adding up to the identity, whose partial transposes are positive semi-definite as well.} on two subsystems held by parties \textsf{A} and \textsf{B}, respectively, is PPT$(\mathsf{A}:\mathsf{B})$. We use the term ``Local Operations and Broadcast Communication'' (LOCC) to describe the scenario of a single round of simultaneous classical communication with local quantum operations before and after the round of communication. Finally, the image of a function $f$ is denoted by $\operatorname{Im}(f)$ and for a set $X$ we write $| X |$ for its cardinality. All other notation is explained in the text.

\subsection{Quantum position verification}
For simplicity, we treat the one-dimensional case here, where all parties are located on a line. The time needed to implement local operations is considered negligibly short compared to the time span of the entire protocol. In order to verify the position of an untrusted party $\mathsf{P}$, two trusted and spatially separated verifiers $\mathsf{V_A}, \mathsf{V_B}$ send quantum inputs to $\mathsf{P}$ from each side and ask them to apply a specific quantum operation. $\mathsf{P}$ has to apply the operation and respond immediately. In the end the verifiers check if they received an answer in time and consistent with the input and the demanded task. The attack model is as follows. Attackers trying to break the protocol are \textit{not} located at \textsf{P} but want to convince the verifiers that they are. Two attackers\footnote{The scenario of more attackers can be reduced to the one described above. Indeed, the attackers closest to \textsf{P} could simply simulate all the other attackers themselves.} $\mathsf{A, B}$ can position themselves between $\mathsf{V_A}, \mathsf{P}$ and $\mathsf{V_B}, \mathsf{P}$, respectively, and intercept the inputs, act locally, communicate one message to each other and then act locally again before they have to commit to answers $\tau_A, \tau_B$. They hence have to simulate the honest quantum operation using only local actions and 1 round of simultaneous communication. In general, they could also pre-share an entangled resource state $\eta_{AB}$ at the start of the protocol. This situation is depicted in figure~\ref{fig:general_qpv}.

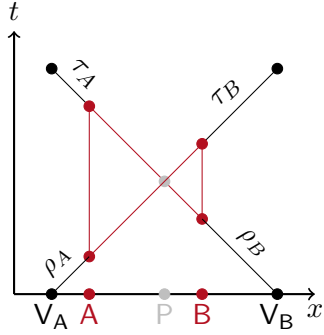
\begin{figure}[h]
\centering
\begin{tikzpicture}
	\draw[->,black,thick] (-0.5,-3) -- (-0.5,0.5) node[above]{$t$};
	\draw[->,black,thick] (-0.5,-3) -- (3.5,-3) node[below]{$x$};
	
    \filldraw [black] (0,0) circle (2pt);
    \draw[black]  (0,0) -- (0.5,-0.5) node[midway, above, sloped]{$\tau_A$};
    \filldraw [QuSoft] (0.5,-0.5) circle (2pt);
    \draw[QuSoft]  (0.5,-0.5) -- (2,-2);
    \draw[QuSoft]  (0.5,-0.5) -- (0.5,-2.5);
    \filldraw [QuSoft] (0.5,-2.5) circle (2pt);
    \draw[black]  (0.5,-2.5) -- (0,-3) node[midway, above, sloped]{\small $\rho_A$};
    \filldraw [black] (0,-3) circle (2pt) node[below]{$\mathsf{V_A}$};
    
    \filldraw [lightgray] (1.5,-1.5) circle (2pt);
    
    \filldraw [black] (3,0) circle (2pt);
    \draw[black]  (3,0) -- (2,-1) node[midway, above, sloped]{$\tau_B$};
    \filldraw [QuSoft] (2,-1) circle (2pt);
    \draw[QuSoft]  (2,-1) -- (0.5,-2.5);
    \draw[QuSoft]  (2,-1) -- (2,-2);
    \filldraw [QuSoft] (2,-2) circle (2pt);
    \draw[black]  (2,-2) -- (3,-3) node[midway, above, sloped]{\small $\rho_B$};
    \filldraw [black] (3,-3) circle (2pt) node[below]{$\mathsf{V_B}$};
    
    \filldraw [QuSoft] (0.5,-3) circle (2pt) node[below]{$\mathsf{A}$};
    \filldraw [lightgray] (1.5,-3) circle (2pt) node[below]{$\mathsf{P}$};
    \filldraw [QuSoft] (2,-3) circle (2pt) node[below]{$\mathsf{B}$};
    
\end{tikzpicture}
\caption{Space-time diagram of a general QPV protocol. We assume all information travels at the speed of light. For graphical simplicity we have put $\mathsf{P}$ exactly in the middle of $\mathsf{V_A}$ and $\mathsf{V_B}$ (which is not necessary). The attackers, not being at position $\mathsf{P}$, would like to convince the verifiers that they are at $\mathsf{P}$ by simulating the honest operation via local operations and one round of simultaneous communication.}
\label{fig:general_qpv}
\end{figure}

\section{QPV and quantum communication}\label{sec:moreresults}
\subsection{A protocol for which quantum communication gives an advantage over LOCC}\label{sec:quantumcomm}
A natural question one might ask is whether there is any advantage for attackers in QPV protocols if they are allowed to perform local operations and quantum communication (LOQC) instead of classical communication. In what follows we will construct an explicit example of a QPV protocol with classical outputs where there is a finite gap in success probability for LOQC strategies over LOCC strategies. 

First, consider the protocol where two verifiers both send half of either one randomly picked symmetric Bell state $\{\ket{\Phi^+}, \ket{\Phi^-}, \ket{\Psi^+}\}$ or the antisymmetric Bell state $\ket{\Psi^-}$, and ask an honest prover whether the entangled state they have sent is symmetric or antisymmetric. An honest prover who can apply entangling operations can answer this question with success probability $1$ by applying a SWAP-test \cite{buhrman_quantum_2001} on the state. From the analysis of the corresponding SDP optimized over PPT measurements it turns out that the best LOCC strategy is upper bounded by $5/6$ (see Appendix \ref{AppendixQCbetter}). The LOCC strategy of measuring both qubits in the computational basis and answering the XOR of the outcomes attains this success probability, so the upper bound over PPT measurements is attained by a LOCC measurement. 

Now suppose the verifiers send two parallel rounds of the previous protocol under the condition that the two rounds are either both a random symmetric Bell state or they are both an antisymmetric Bell state and we ask the prover whether the input consisted of two symmetric or two antisymmetric states. An honest prover who can apply entangling operations can still solve this protocol with success probability $1$ by applying a SWAP-test to one of the two pairs. Now note that attackers who have access to a quantum channel can send half of their input state to each other such that both attackers locally end up with a Bell state which they can perfectly determine. Thus attackers restricted to quantum communication can attack this protocol perfectly. Interestingly, it turns out that this is not possible for attackers restricted to classical communication. 

From the analysis of the SDP it turns out that the upper bound for two attackers restricted to PPT measurements is $17/18$, cf. appendix \ref{AppendixQCbetter}. Again there is a LOCC strategy that makes this bound tight, namely measuring both pairs in the computational basis and only answering ``antisymmetric'' if both pairs have unequal measurement outcomes and respond ``symmetric'' otherwise. This strategy is always correct on antisymmetric inputs. And it is only incorrect on symmetric inputs if both times the state $\ket{\Psi^+}$ was sent, this happens with probability $1/18$, so the total probability of success of the LOCC protocol then becomes $17/18$. By incorporating loss in the SDP program as done in \cite{lim_loss-tolerant_2016} and \cite{AllBuhSpeVer22_swap}, we also find that this protocol is loss-tolerant.

Thus we have constructed a QPV protocol where the probability of success for attackers restricted to single round LOCC measurements is strictly lower than attackers restricted to single round LOQC measurements. This shows that there can be an advantage for quantum communication over classical communication, and it could be important in the analysis of the security of QPV protocols. However, it is clear that our construction is not a very good protocol as there is redundant information given to the attackers and sending just one of the two symmetric or antisymmetric states would give a seemingly better protocol.

\subsection{Splitting Scheme}
\label{sec:splitting}

In this section we present a procedure that distills a QPV protocol secure against attackers using a single round of simultaneous quantum communication from the existence of a QPV protocol that is secure against adversaries restricted to LOCC operations. We will use that the existence of a perfect quantum communication attack on a QPV protocol generates two new QPV protocols, which, when applied recursively, ultimately leads to the existence of a QPV protocol that is secure against adversaries restricted to LOCC \textit{and} cannot be perfectly attacked by adversaries using quantum communication. 

Take any QPV protocol in which two verifiers $\mathsf{V_A}, \mathsf{V_B}$ send states $\rho_A, \rho_B$ and ask for the outcome of, say, some entangling measurement on the joint state $\rho_{AB}$. Suppose the protocol is secure against adversaries restricted to LOCC, \ie, there is a finite gap in the probability of success between an honest prover and adversaries restricted to LOCC operations, but also assume that the protocol can be broken perfectly by adversaries using quantum communication. In the most general setting the actions of the adversaries are as follows:
\begin{itemize}
    \item Adversaries $A,B$ receive $\rho_A, \rho_B$ respectively as input states.
    \item Apply some local channel $\mathcal{A}(\rho_{A}) = \sigma_{A_1 A_2}, \, \mathcal{B}(\rho_{B}) = \sigma_{B_1 B_2}$.
    \item Send some share of their local outcome to the other adversary.
    \item Apply a measurement on the new local states $\sigma_{A_1 B_1}$ and $\sigma_{A_2 B_2}$.
    \item Send the measurement outcome to their respective verifiers.
\end{itemize} 
Now note that both $\sigma_{A_1}, \sigma_{B_1}$ and $\sigma_{A_2}, \sigma_{B_2}$ can be used as input states to define two new QPV protocols, where the measurement an honest prover needs to apply is equal to the measurement the attackers would apply in the quantum communication attack in the original protocol. Then the probability of success for the honest verifier in the newly defined protocol is the probability of success of the adversaries using quantum communication in the previous protocol, which we assumed to be perfect.

Note that any LOCC attack on one of these newly arising protocols was already a valid LOCC attack in the previous protocol with the inputs $\rho_A$ and $\rho_B$. The attackers can simply apply the local channels $\mathcal{A}, \mathcal{B}$, discard the state they don't use and apply their attack. Also note that if the input states $\rho_A, \rho_B$ were product states the input states in the newly created protocol are also product states. We have therefore split the QPV protocol into two new protocols using only the existence of a perfect quantum communication attack. \\

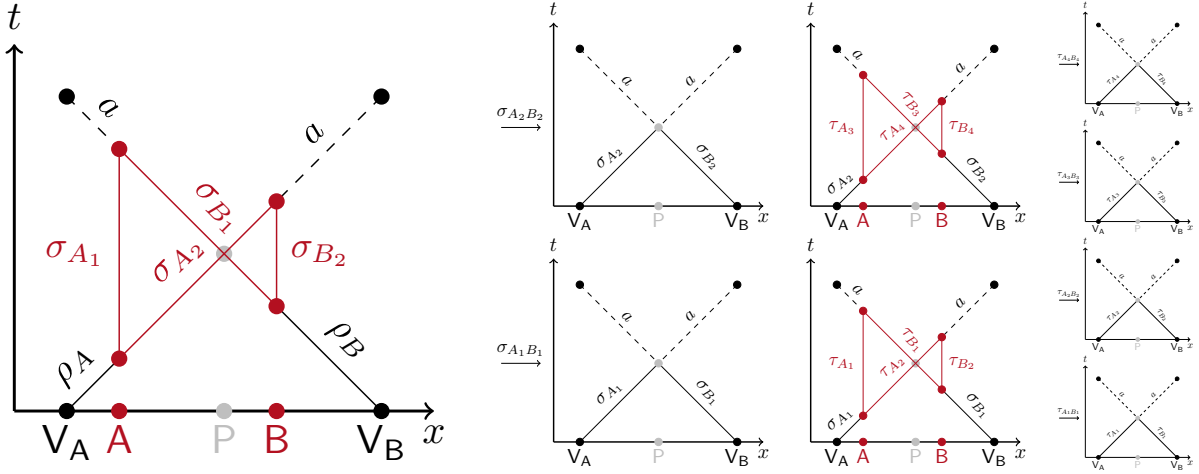
\begin{figure}[ht]
\captionsetup{width=.8\linewidth}
\centering
\resizebox{\linewidth}{!}{
\scalebox{1.2}{
\begin{tikzpicture}
	\draw[->,black,thick] (-0.5,-3) -- (-0.5,0.5) node[above]{$t$};
	\draw[->,black,thick] (-0.5,-3) -- (3.5,-3) node[below]{$x$};
    \filldraw [lightgray] (1.5,-3) circle (2pt) node[below]{$\mathsf{P}$};	
    \filldraw [lightgray] (1.5,-1.5) circle (2pt);
	
    \filldraw [black] (0,0) circle (2pt);
    \draw[black,dashed]  (0,0) -- (0.5,-0.5) node[midway, above, sloped]{\small $a$};
    \draw[black]  (0.5,-2.5) -- (0,-3) node[midway, above, sloped]{\small $\rho_A$};
    \filldraw [QuSoft] (0.5,-0.5) circle (2pt);
    \draw[QuSoft]  (0.5,-0.5) -- (2,-2) node[midway, above, sloped]{\small $\sigma_{B_1}$};
    \draw[QuSoft]  (0.5,-0.5) -- (0.5,-2.5) node[midway, left]{\small $\sigma_{A_1}$};
    \filldraw [QuSoft] (0.5,-2.5) circle (2pt) ;
    \filldraw [black] (0,-3) circle (2pt) node[below]{$\mathsf{V_A}$};
    
    \filldraw [black] (3,0) circle (2pt);
    \draw[black,dashed]  (3,0) -- (2,-1) node[midway, above, sloped]{\small $a$};
    \draw[black]  (2,-2) -- (3,-3) node[midway, above, sloped]{\small $\rho_B$};
    \filldraw [black] (3,-3) circle (2pt) node[below]{$\mathsf{V_B}$};
    \filldraw [QuSoft] (2,-1) circle (2pt); 
    \draw[QuSoft]  (2,-1) -- (0.5,-2.5) node[midway, above, sloped]{\small $\sigma_{A_2}$};
    \draw[QuSoft]  (2,-1) -- (2,-2) node[midway, right]{\small $\sigma_{B_2}$};
    \filldraw [QuSoft] (2,-2) circle (2pt);
    
    \filldraw [QuSoft] (0.5,-3) circle (2pt) node[below]{$\mathsf{A}$};

    \filldraw [QuSoft] (2,-3) circle (2pt) node[below]{$\mathsf{B}$};
    
\end{tikzpicture}
}
\scalebox{0.6}{
\begin{tikzpicture}
    \draw[->, black] (-1.5, -1.5+4.5) -- (-0.75, -1.5+4.5) node[midway, above]{\small $\sigma_{A_2B_2}$};
	\draw[->,black,thick] (-0.5,-3+4.5) -- (-0.5,0.5+4.5) node[above]{$t$};
	\draw[->,black,thick] (-0.5,-3+4.5) -- (3.5,-3+4.5) node[below]{$x$};
    \filldraw [lightgray] (1.5,-3+4.5) circle (2pt) node[below]{$\mathsf{P}$};

    \filldraw [black] (0,0+4.5) circle (2pt);
    \draw[black,dashed]  (0,0+4.5) -- (1.5,-1.5+4.5) node[midway, above, sloped]{\small $a$};
    \draw[black]  (1.5,-1.5+4.5) -- (0,-3+4.5) node[midway, above, sloped]{\small $\sigma_{A_2}$};
    \filldraw [black] (0,-3+4.5) circle (2pt) node[below]{$\mathsf{V_A}$};
    
    \filldraw [black] (3,0+4.5) circle (2pt);
    \draw[black,dashed]  (3,0+4.5) -- (1.5,-1.5+4.5) node[midway, above, sloped]{\small $a$};
    \draw[black]  (1.5,-1.5+4.5) -- (3,-3+4.5) node[midway, above, sloped]{\small $\sigma_{B_2}$};
    \filldraw [black] (3,-3+4.5) circle (2pt) node[below]{$\mathsf{V_B}$};
    \filldraw [lightgray] (1.5,-1.5+4.5) circle (2pt);

    \draw[->, black] (-1.5, -1.5) -- (-0.75, -1.5) node[midway, above]{\small $\sigma_{A_1B_1}$};
	\draw[->,black,thick] (-0.5,-3) -- (-0.5,0.5) node[above]{$t$};
	\draw[->,black,thick] (-0.5,-3) -- (3.5,-3) node[below]{$x$};
    \filldraw [lightgray] (1.5,-3) circle (2pt) node[below]{$\mathsf{P}$};	
    
    \filldraw [black] (0,0) circle (2pt);
    \draw[black,dashed]  (0,0) -- (1.5,-1.5) node[midway, above, sloped]{\small $a$};
    \draw[black]  (1.5,-1.5) -- (0,-3) node[midway, above, sloped]{\small $\sigma_{A_1}$};
    \filldraw [black] (0,-3) circle (2pt) node[below]{$\mathsf{V_A}$};
    
    \filldraw [black] (3,0) circle (2pt);
    \draw[black,dashed]  (3,0) -- (1.5,-1.5) node[midway, above, sloped]{\small $a$};
    \draw[black]  (1.5,-1.5) -- (3,-3) node[midway, above, sloped]{\small $\sigma_{B_1}$};
    \filldraw [black] (3,-3) circle (2pt) node[below]{$\mathsf{V_B}$};
    \filldraw [lightgray] (1.5,-1.5) circle (2pt);
\end{tikzpicture}
}
\scalebox{0.6}{
\begin{tikzpicture}
	\draw[->,black,thick] (-0.5,-3+4.5) -- (-0.5,0.5+4.5) node[above]{$t$};
	\draw[->,black,thick] (-0.5,-3+4.5) -- (3.5,-3+4.5) node[below]{$x$};
    \filldraw [lightgray] (1.5,-3+4.5) circle (2pt) node[below]{$\mathsf{P}$};	
    \filldraw [lightgray] (1.5,-1.5+4.5) circle (2pt);
	
    \filldraw [black] (0,0+4.5) circle (2pt);
    \draw[black,dashed]  (0,0+4.5) -- (0.5,-0.5+4.5) node[midway, above, sloped]{\small $a$};
    \draw[black]  (0.5,-2.5+4.5) -- (0,-3+4.5) node[midway, above, sloped]{\small $\sigma_{A_2}$};
    \filldraw [QuSoft] (0.5,-0.5+4.5) circle (2pt); 
    \draw[QuSoft]  (0.5,-0.5+4.5) -- (2,-2+4.5) node[midway, above, sloped]{\small $\tau_{B_3}$};
    \draw[QuSoft]  (0.5,-0.5+4.5) -- (0.5,-2.5+4.5) node[midway, left]{\small $\tau_{A_3}$};
    \filldraw [QuSoft] (0.5,-2.5+4.5) circle (2pt) ;
    \filldraw [black] (0,-3+4.5) circle (2pt) node[below]{$\mathsf{V_A}$};
    
    \filldraw [black] (3,0+4.5) circle (2pt);
    \draw[black,dashed]  (3,0+4.5) -- (2,-1+4.5) node[midway, above, sloped]{\small $a$};
    \draw[black]  (2,-2+4.5) -- (3,-3+4.5) node[midway, above, sloped]{\small $\sigma_{B_2}$};
    \filldraw [black] (3,-3+4.5) circle (2pt) node[below]{$\mathsf{V_B}$};
    \filldraw [QuSoft] (2,-1+4.5) circle (2pt); 
    \draw[QuSoft]  (2,-1+4.5) -- (0.5,-2.5+4.5) node[midway, above, sloped]{\small $\tau_{A_4}$};
    \draw[QuSoft]  (2,-1+4.5) -- (2,-2+4.5) node[midway, right]{\small $\tau_{B_4}$};
    \filldraw [QuSoft] (2,-2+4.5) circle (2pt);
    
    \filldraw [QuSoft] (0.5,-3+4.5) circle (2pt) node[below]{$\mathsf{A}$};
    \filldraw [QuSoft] (2,-3+4.5) circle (2pt) node[below]{$\mathsf{B}$};

	\draw[->,black,thick] (-0.5,-3) -- (-0.5,0.5) node[above]{$t$};
	\draw[->,black,thick] (-0.5,-3) -- (3.5,-3) node[below]{$x$};
    \filldraw [lightgray] (1.5,-3) circle (2pt) node[below]{$\mathsf{P}$};	
    \filldraw [lightgray] (1.5,-1.5) circle (2pt);
	
    \filldraw [black] (0,0) circle (2pt);
    \draw[black,dashed]  (0,0) -- (0.5,-0.5) node[midway, above, sloped]{\small $a$};
    \draw[black]  (0.5,-2.5) -- (0,-3) node[midway, above, sloped]{\small $\sigma_{A_1}$};
    \filldraw [QuSoft] (0.5,-0.5) circle (2pt);
    \draw[QuSoft]  (0.5,-0.5) -- (2,-2) node[midway, above, sloped]{\small $\tau_{B_1}$};
    \draw[QuSoft]  (0.5,-0.5) -- (0.5,-2.5) node[midway, left]{\small $\tau_{A_1}$};
    \filldraw [QuSoft] (0.5,-2.5) circle (2pt) ;
    \filldraw [black] (0,-3) circle (2pt) node[below]{$\mathsf{V_A}$};
    
    \filldraw [black] (3,0) circle (2pt);
    \draw[black,dashed]  (3,0) -- (2,-1) node[midway, above, sloped]{\small $a$};
    \draw[black]  (2,-2) -- (3,-3) node[midway, above, sloped]{\small $\sigma_{B_1}$};
    \filldraw [black] (3,-3) circle (2pt) node[below]{$\mathsf{V_B}$};
    \filldraw [QuSoft] (2,-1) circle (2pt); 
    \draw[QuSoft]  (2,-1) -- (0.5,-2.5) node[midway, above, sloped]{\small $\tau_{A_2}$};
    \draw[QuSoft]  (2,-1) -- (2,-2) node[midway, right]{\small $\tau_{B_2}$};
    \filldraw [QuSoft] (2,-2) circle (2pt);
    
    \filldraw [QuSoft] (0.5,-3) circle (2pt) node[below]{$\mathsf{A}$};
    \filldraw [QuSoft] (2,-3) circle (2pt) node[below]{$\mathsf{B}$};
    
\end{tikzpicture}
}
\scalebox{0.3}{
\begin{tikzpicture}
    \draw[->, black] (-1.5, -1.5+13.5) -- (-0.75, -1.5+13.5) node[midway, above]{\small $\tau_{A_4B_4}$};
    \draw[->,black,thick] (-0.5,-3+13.5) -- (-0.5,0.5+13.5) node[above]{$t$};
	\draw[->,black,thick] (-0.5,-3+13.5) -- (3.5,-3+13.5) node[below]{$x$};
    \filldraw [lightgray] (1.5,-3+13.5) circle (2pt) node[below]{$\mathsf{P}$};	
    
    \filldraw [black] (0,0+13.5) circle (2pt);
    \draw[black,dashed]  (0,0+13.5) -- (1.5,-1.5+13.5) node[midway, above, sloped]{\small $a$};
    \draw[black]  (1.5,-1.5+13.5) -- (0,-3+13.5) node[midway, above, sloped]{\small $\tau_{A_4}$};
    \filldraw [black] (0,-3+13.5) circle (2pt) node[below]{$\mathsf{V_A}$};
    
    \filldraw [black] (3,0+13.5) circle (2pt);
    \draw[black,dashed]  (3,0+13.5) -- (1.5,-1.5+13.5) node[midway, above, sloped]{\small $a$};
    \draw[black]  (1.5,-1.5+13.5) -- (3,-3+13.5) node[midway, above, sloped]{\small $\tau_{B_4}$};
    \filldraw [black] (3,-3+13.5) circle (2pt) node[below]{$\mathsf{V_B}$};
    \filldraw [lightgray] (1.5,-1.5+13.5) circle (2pt);
    \draw[->, black] (-1.5, -1.5+9) -- (-0.75, -1.5+9) node[midway, above]{\small $\tau_{A_3B_3}$};

	\draw[->,black,thick] (-0.5,-3+9) -- (-0.5,0.5+9) node[above]{$t$};
	\draw[->,black,thick] (-0.5,-3+9) -- (3.5,-3+9) node[below]{$x$};
    \filldraw [lightgray] (1.5,-3+9) circle (2pt) node[below]{$\mathsf{P}$};	
    
    \filldraw [black] (0,0+9) circle (2pt);
    \draw[black,dashed]  (0,0+9) -- (1.5,-1.5+9) node[midway, above, sloped]{\small $a$};
    \draw[black]  (1.5,-1.5+9) -- (0,-3+9) node[midway, above, sloped]{\small $\tau_{A_3}$};
    \filldraw [black] (0,-3+9) circle (2pt) node[below]{$\mathsf{V_A}$};
    
    \filldraw [black] (3,0+9) circle (2pt);
    \draw[black,dashed]  (3,0+9) -- (1.5,-1.5+9) node[midway, above, sloped]{\small $a$};
    \draw[black]  (1.5,-1.5+9) -- (3,-3+9) node[midway, above, sloped]{\small $\tau_{B_3}$};
    \filldraw [black] (3,-3+9) circle (2pt) node[below]{$\mathsf{V_B}$};
    \filldraw [lightgray] (1.5,-1.5+9) circle (2pt);
    \draw[->, black] (-1.5, -1.5+4.5) -- (-0.75, -1.5+4.5) node[midway, above]{\small $\tau_{A_2B_2}$};
	\draw[->,black,thick] (-0.5,-3+4.5) -- (-0.5,0.5+4.5) node[above]{$t$};
	\draw[->,black,thick] (-0.5,-3+4.5) -- (3.5,-3+4.5) node[below]{$x$};
    \filldraw [lightgray] (1.5,-3+4.5) circle (2pt) node[below]{$\mathsf{P}$};	
    
    \filldraw [black] (0,0+4.5) circle (2pt);
    \draw[black,dashed]  (0,0+4.5) -- (1.5,-1.5+4.5) node[midway, above, sloped]{\small $a$};
    \draw[black]  (1.5,-1.5+4.5) -- (0,-3+4.5) node[midway, above, sloped]{\small $\tau_{A_2}$};
    \filldraw [black] (0,-3+4.5) circle (2pt) node[below]{$\mathsf{V_A}$};
    
    \filldraw [black] (3,0+4.5) circle (2pt);
    \draw[black,dashed]  (3,0+4.5) -- (1.5,-1.5+4.5) node[midway, above, sloped]{\small $a$};
    \draw[black]  (1.5,-1.5+4.5) -- (3,-3+4.5) node[midway, above, sloped]{\small $\tau_{B_2}$};
    \filldraw [black] (3,-3+4.5) circle (2pt) node[below]{$\mathsf{V_B}$};
    \filldraw [lightgray] (1.5,-1.5+4.5) circle (2pt);
    \draw[->, black] (-1.5, -1.5) -- (-0.75, -1.5) node[midway, above]{\small $\tau_{A_1B_1}$};
	\draw[->,black,thick] (-0.5,-3) -- (-0.5,0.5) node[above]{$t$};
	\draw[->,black,thick] (-0.5,-3) -- (3.5,-3) node[below]{$x$};
    \filldraw [lightgray] (1.5,-3) circle (2pt) node[below]{$\mathsf{P}$};	
    
    \filldraw [black] (0,0) circle (2pt);
    \draw[black,dashed]  (0,0) -- (1.5,-1.5) node[midway, above, sloped]{\small $a$};
    \draw[black]  (1.5,-1.5) -- (0,-3) node[midway, above, sloped]{\small $\tau_{A_1}$};
    \filldraw [black] (0,-3) circle (2pt) node[below]{$\mathsf{V_A}$};
    
    \filldraw [black] (3,0) circle (2pt);
    \draw[black,dashed]  (3,0) -- (1.5,-1.5) node[midway, above, sloped]{\small $a$};
    \draw[black]  (1.5,-1.5) -- (3,-3) node[midway, above, sloped]{\small $\tau_{B_1}$};
    \filldraw [black] (3,-3) circle (2pt) node[below]{$\mathsf{V_B}$};
    \filldraw [lightgray] (1.5,-1.5) circle (2pt);
\end{tikzpicture}
}
}
\caption{Visual representation of splitting into two new QPV protocols from the existence of a quantum communication attack on a single QPV protocol. Two attackers $A,B$ receive inputs $\rho_A, \rho_B$ and apply some channel $\mathcal{A}(\rho_{A}) = \sigma_{A_1 A_2}, \mathcal{B}(\rho_{B}) = \sigma_{B_1 B_2}$ and send parts of their outcome to the other party. This procedure defines two new QPV protocols. If there again exists a perfect quantum communication attack for both new protocols, then by the same argument we can define 4 new QPV protocols, and so on.}
\label{fig:qpvdoubledown}
\end{figure}    

\noindent Now there are two options for the newly defined protocols:
\begin{itemize}
    \item There does not exist a perfect attack using quantum communication for at least one of the two new QPV protocols, in which case we have shown the existence of a QPV protocol that is safe against adversaries using quantum communication and we are done.
    \item For both protocols there exists a perfect attack using quantum communication. In which case we can apply our previous argument to generate 4 new QPV protocols. See Figure~\ref{fig:qpvdoubledown} for a visual representation of this splitting argument.
\end{itemize}

\noindent The previous options are true for all arising QPV protocols after splitting and we wish to show the existence of a QPV protocol safe against quantum communication. We therefore suppose all of the induced QPV protocols after splitting $n$ times can be attacked perfectly using quantum communication for any $n \geq 2$.

Note that the input states sent from verifier $\mathsf{V_A}$ in the induced QPV protocols after splitting only depend on the previous input states send from $\mathsf{V_A}$ and vice-versa for the input states from $\mathsf{V_B}$. We can write this action as channels $\Lambda^A_n : \mathcal{D}(A) \to \mathcal{D}(A_1 \otimes \dots \otimes A_{2^n})$, mapping $\rho_{A} \mapsto \sigma_{A_1 \ \dots \ A_{2^n}}$, and $\Lambda^B_n : \mathcal{D}(B) \to \mathcal{D}(B_1 \otimes \dots \otimes B_{2^n})$, mapping $\rho_{B} \mapsto \sigma_{B_1 \ \dots \ B_{2^n}}$. The idea of this proof is that the reduced states $\sigma_{A_i}$ and $\sigma_{B_i}$ become approximately classical, and that attackers could immediately measure their incoming states and share the classical measurement outcome instead of sending some quantum message. This would lead to a contradiction since the success probability of this procedure would be upper bounded by the LOCC bound of the original QPV protocol, while at the same time, by assumption, this attack should become approximately close to a perfect one. To be more precise, we use Theorem~\ref{thm:QiRanard} on the emergent classicality of channels from \cite{qi2020emergent}.
\begin{theorem}[Qi-Ranard]\label{thm:QiRanard}
Consider a quantum channel $\Lambda : \mathcal{D}(A) \to \mathcal{D}(B_1 \otimes ... \otimes B_n)$. For output subsets $R \subset \{B_1,...,B_n\}$, let $\Lambda_R \equiv \Tr_{\bar{R}} \circ \Lambda : \mathcal{D}(A) \to \mathcal{D}(R)$ denote the reduced channel onto R, obtained by tracing out the complement $\bar{R}$. Then for any $|Q|,|R| \in \{1,...,n\}$, there exists a measurement, described by a positive-operator valued measure (POVM) $\{M_\alpha\}$, and an ``excluded'' output subset $Q \subset \{B_1,...,B_n\}$ of size $|Q|$, such that for all output subsets $R$ of size $|R|$, disjoint from $Q$, we have 
\begin{equation}
    \| \Lambda_R - \mathcal{E}_R \|_\diamond \leq d^3_A \sqrt{2 \ln(d_A)\frac{|R|}{|Q|}},
\end{equation}
using a measure-and-prepare channel
\begin{equation}
    \mathcal{E}_R(X) := \sum_\alpha \Tr(M_\alpha X) \sigma_R^\alpha
\end{equation}
for some states $\{\sigma^\alpha_R\}_\alpha$ on $R$, where $d_A = dim(A)$ and $\|...\|_\diamond$ is the diamond norm on channels. The measurement $\{M_\alpha\}$ does not depend on the choice of $R$, while the prepared states $\sigma_R^\alpha$ may depend on $R$.
\end{theorem}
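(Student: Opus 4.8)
The plan is to turn the intuition behind the no-broadcasting theorem into a quantitative bound: a channel with input dimension $d_A$ can spread at most about $2\ln d_A$ nats of correlation with its input over the outputs $B_1,\dots,B_n$, so this correlation is necessarily monogamous; once a large chunk $Q$ of the outputs has absorbed essentially all of it, every small disjoint chunk $R$ is nearly decoupled from the input conditioned on $Q$, and this conditional decoupling is precisely the measure-and-prepare property, the ``measurement'' being ``run the channel onto $Q$ and read off the classical label''.

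First I would pass from the diamond norm to the Choi state. Setting $\rho_{AB_1\cdots B_n} := (\mathrm{id}_A\otimes\Lambda)(\ketbra{\Phi}{\Phi})$ for the maximally entangled input $\ket{\Phi}$ on $A\otimes A$, we get $\rho_A = \mathbbm 1/d_A$ and hence $S(A)_\rho = \ln d_A$ in nats, while the Choi state of $\Lambda_R$ is just the marginal $\rho_{AR}$ and $\|\Lambda_R-\mathcal E_R\|_\diamond \le d_A\,\|\rho_{AR} - J(\mathcal E_R)\|_1$. So it suffices to produce $Q$, a POVM $\{M_\alpha\}$ on $A$, and states $\sigma^\alpha_R$ for which the marginal $\rho_{AR}$ is close in trace norm to the ideal correlated data; the remaining powers of $d_A$ in the claimed prefactor are picked up below, when the average-case (Choi) estimate is converted to the worst-case diamond norm and pushed through the pullback of the measurement.

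Second comes the monogamy step, which I expect to be the main obstacle. Using the chain rule $\sum_k I(A:S_k\mid S_1\cdots S_{k-1})_\rho = I(A:S_1\cdots S_m)_\rho \le 2S(A)_\rho = 2\ln d_A$ for disjoint output blocks $S_1,\dots,S_m$, a pigeonhole argument on such a telescoped sum of conditional mutual informations should yield a block $Q$ of size $|Q|$ with $I(A:R\mid Q)_\rho \le 2\ln(d_A)\,|R|/|Q|$. The difficulty is that this must hold \emph{uniformly} over all output subsets $R$ of size $|R|$ disjoint from $Q$: since conditional mutual information is neither submodular in the conditioning block nor subadditive over the parties of $R$, one cannot deduce the group estimate from pairwise ones, and $Q$ has to be selected so that it simultaneously dominates every small disjoint group.

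Third, I would convert the resulting approximate conditional independence into the approximate measure-and-prepare form. In the exact case $I(A:R\mid Q)=0$, the structure theorem for quantum Markov chains gives a decomposition $\mathcal H_Q = \bigoplus_\alpha \mathcal H_{Q^L_\alpha}\otimes\mathcal H_{Q^R_\alpha}$ with $\rho_{AQR} = \sum_\alpha p_\alpha\,\rho_{AQ^L_\alpha}\otimes\rho_{Q^R_\alpha R}$; tracing out $Q$ and rewriting via the Choi isomorphism gives $\Lambda_R(X) = \sum_\alpha \Tr(M_\alpha X)\,\sigma^\alpha_R$ exactly, where $M_\alpha := d_A\,p_\alpha\,(\rho^\alpha_A)^T$ is a genuine POVM (indeed $\sum_\alpha M_\alpha = d_A\,\rho_A^T = \mathbbm 1$), $\sigma^\alpha_R$ is the $R$-marginal of $\rho_{Q^R_\alpha R}$, and crucially $\{M_\alpha\}$ is read off from $Q$ alone and does not depend on $R$, exactly as the theorem demands. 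For the approximate case I would invoke a Fawzi--Renner-type recoverability bound --- ideally a version with a measure-and-prepare recovery map on $Q$ --- so that small $I(A:R\mid Q)$ forces fidelity $\ge e^{-I(A:R\mid Q)/2}$ to a state of the exact form above, hence trace distance $O(\sqrt{\ln(d_A)\,|R|/|Q|})$; pushing this through the Choi-to-diamond conversion and the measurement pullback then gives the diamond-norm bound with the stated prefactor $d_A^3$. A secondary difficulty is that full structural robustness of quantum Markov chains can fail, so one must either work directly with the recovery map or absorb a dimension-dependent loss, which is consistent with the dimension factors already present in the statement.
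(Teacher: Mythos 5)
The first thing to note is that the paper does not prove this theorem at all: it is imported verbatim from Qi and Ranard \cite{qi2020emergent} and used as a black box in the splitting argument of Section~3.2, so the only meaningful comparison is with the proof in that reference. Your skeleton does match it at a high level: pass to the Choi state $\rho_{AB_1\cdots B_n}$, use $I(A:B_1\cdots B_n)\le 2S(A)\le 2\ln d_A$ together with the chain rule to locate an excluded region $Q$ with $I(A:R\mid Q)\le 2\ln(d_A)\,|R|/|Q|$ for all disjoint $R$ (this quantity is exactly what sits under the square root in the stated bound), and then convert small conditional mutual information into the measure-and-prepare form. Your exact-Markov-chain computation is also correct and correctly explains why the POVM lives on the input $A$ and is $R$-independent: $M_\alpha = d_A p_\alpha(\rho_A^\alpha)^T$ with $\sum_\alpha M_\alpha = d_A\rho_A^T=\mathbbm{1}$ because the Choi state has maximally mixed $A$-marginal.

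However, both steps you flag as ``difficulties'' are genuine gaps that your sketch does not close, and for the second one the tool you reach for is known to fail. (i) The uniform choice of $Q$ does not follow from a pigeonhole on a single telescoped sum --- that controls only one particular $R$. The fix is an adaptive greedy construction: start from $Q=\emptyset$ and, as long as some disjoint $R$ of size $|R|$ has $I(A:R\mid Q)>\epsilon$, adjoin it to $Q$; by the chain rule $I(A:Q)$ grows by more than $\epsilon$ at each step, so the process halts after fewer than $2\ln(d_A)/\epsilon$ steps, giving $|Q|\le 2|R|\ln(d_A)/\epsilon$, i.e.\ $\epsilon = 2\ln(d_A)|R|/|Q|$. (ii) The passage from small $I(A:R\mid Q)$ to measure-and-prepare cannot be routed through a ``robust'' Hayden--Jozsa--Petz--Winter structure theorem: it is known (Ibinson--Linden--Winter and subsequent counterexamples) that a state can have tiny conditional mutual information while being far from every exact Markov state, with the discrepancy controlled only by the dimension of the conditioning system $Q$ --- which is unbounded here, so no loss depending only on $d_A$ can absorb it. The Fawzi--Renner alternative gives a recovery map acting on $Q$ that depends on $\rho_{QR}$, hence on $R$; it yields $\Lambda_R\approx\mathcal{R}_R\circ\Lambda_Q$, which is neither measure-and-prepare nor compatible with the required $R$-independence of $\{M_\alpha\}$. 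Closing this step needs the dedicated lemma of \cite{qi2020emergent} that extracts a single measurement on $A$ from $\rho_{AQ}$ alone and bounds the error by a function of $I(A:R\mid Q)$ with only $\mathrm{poly}(d_A)$ prefactors; as written, your proposal assumes rather than proves the crux of the theorem.
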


Applying the theorem and setting the size of the excluded output set for both channels $\Lambda_n^A, \Lambda_n^B$ to $|Q_{A}| = |Q_{B}| = 2^{n-1} - 1$,  we have, by the pigeonhole principle, that for some index $i \in \{1, \dots, 2^n\}$ both output sets $A_i, B_i$ must be in the sets disjoint from $Q_A$ and $Q_B$. Setting the size of the reduced channels to $|R_A| = |R_B| = 1$, we see that in both cases the reduced channel $\Tr_{\bar{R}} \circ \Lambda^{A/B}_n$ converges to a measure-and-prepare channel in the number of splittings $n$ for any output:
\begin{equation}
    \| \Tr_{\bar{R}_{A/B}} \circ \Lambda^{A/B}_n - \mathcal{E}_{R_{A/B}} \|_\diamond \leq 8 \sqrt{\frac{2 \ln(d_{A/B})}{2^{n-1} -1}}.
\end{equation}
The theorem implies that the reduced channels that maps the input states $\rho_A \mapsto \sigma_{A_i}$ and $\rho_B \mapsto \sigma_{B_i}$ become approximately close to measure-and-prepare channels. Crucially, the measurements $\{M^{A/B}_\alpha\}$ in the respective measure-and-prepare channels do not depend on the choice of $R$. This gives rise to an LOCC attack in the original QPV protocol from which we started. Two attackers $A,B$ simply apply the local measure-and-prepare channels $\mathcal{E}_{R_A}, \mathcal{E}_{R_B}$ and exchange the classical measurement outcomes $\alpha_1, \alpha_2$. Both attackers then know the state $\sum_{\alpha_1} p_{\alpha_1} \sigma^{\alpha_1}_{A_i} \otimes \sum_{\alpha_2} p_{\alpha_2} \sigma^{\alpha_2}_{B_i}$ which is arbitrarily close to $\sigma_{A_i} \otimes \sigma_{B_i}$ in $n$. Since for any QPV protocol the POVM measurement that the honest verifier has to apply is publicly known, both attackers can calculate the probability distribution of the answers of an honest prover. Using shared randomness to generate an equal answer both attackers can now mimic the probability of success of an honest verifier arbitrarily well. 

This LOCC attack allows attackers to answer correctly with a probability of success that converges to the honest probability of success in the number of splittings $n$. By assumption $n$ can be arbitrarily large and thus the attackers have an LOCC attack that performs arbitrarily well. However, since for our protocol at the start there is a finite gap between the LOCC probability of success and the honest probability of success, we have a contradiction and conclude that at some level in the recursion there must exist a QPV protocol that cannot be attacked perfectly. That protocol must then be safe against unentangled adversaries restricted to quantum communication arises.

\subsection{Security of \texorpdfstring{QPV$_{\textsf{Bell}}$}{QPVbell} against quantum communication} \label{sec:QCproofagainstBellstatediscrimination}

In this section we give the first example of a classically loss-tolerant QPV protocol that is secure against attackers restricted to quantum communication. Furthermore we will show that there is no perfect attack with loss in the quantum communication setting for this protocol, making it the first example of a protocol that is secure against lossy quantum communication attacks with no pre-shared entanglement. We will give two different proofs that both show security against quantum communication, one based on monogamy of entanglement and one based off a no-cloning argument. The monogamy of entanglement gives a worse bound, but seems to be a more protocol-agnostic approach to proving security in the quantum communication setting, making its derivation still useful.

The protocol we look into is the Bell state discrimination problem. Two verifiers send as inputs the respective qubits of one of the four Bell states ($\ket{\Phi^+}, \ket{\Phi^-}, \ket{\Psi^+}, \ket{\Psi^-}$) and ask the prover which Bell state he receives. An honest prover can answer this task perfectly by doing a Bell measurement. With an SDP and a similar analysis as in \cite{AllBuhSpeVer22_swap}, we can show that this protocol is secure and loss-tolerant against attackers restricted to classical communication. An optimal attacking strategy turns out to measure the qubits in the computational basis, which distinguishes $\{ \ket{\Phi^+}, \ket{\Phi^-} \}$ from $\{ \ket{\Psi^+}, \ket{\Psi^-} \}$ and then to guess one of the two Bell states as an answer. This has success probability $1/2$ and is optimal. 

To analyze security against attackers restricted to quantum communication we look the protocol in the following equivalent purified way.
\begin{itemize}
    \item The inputs of the protocol will be the second qubit of the maximally entangled state $\frac{1}{\sqrt{2}} (\ket{00} + \ket{11})_{V_AA}$ and $\frac{1}{\sqrt{2}} (\ket{00} + \ket{11})_{V_BB}$. 
    \item The honest prover has to do a Bell measurement, this is now an entanglement swapping operation where the entanglement between the verifiers and the prover gets swapped to entanglement between the two verifiers and entanglement between the two qubits the prover holds. The prover then sends out his classical measurement outcome $i$ to the verifiers. 
    \item The verifiers check whether the entanglement swapping operation was successful by applying a Bell measurement on their joint state and check whether their measurement outcome is the same as the answer of the prover. The probability of successfully attacking the protocol now corresponds to the provers having the correct Bell state as measurement outcome, averaged over all possible Bell states, i.e. $p_{\text{succ}} = \frac{1}{4} \sum_i \Tr[\ket{\text{Bell}_i}\bra{\text{Bell}_i} \rho_{V_AV_B}^i]$. 
\end{itemize}
Note that from the point of prover nothing changes from the original Bell state discrimination protocol. The honest prover still needs to perform a Bell state measurement on his incoming qubits and send his measurement outcome to both verifiers. 

The idea of this proof is that by reformulating the QPV$_{\textsf{Bell}}$ protocol as an entanglement swapping protocol we can use the monogamy of entanglement property between the qubits that remain at the verifiers and the quantum systems attackers create. Furthermore, while it is hard to say anything about the quantum systems attackers might send to each other, the states the verifiers keep are always in their control. 
    
As stated in section \ref{sec:splitting} the most general quantum communication attack is for attackers to split their inputs into two quantum systems. They hold on to one and forward the other system to the other attacker. After the quantum communication round the attackers hold the reduced states $\rho_{A_1} \otimes \rho_{B_1}$ and $\rho_{A_2} \otimes \rho_{B_2}$, respectively. Since there is no more further communication, for a quantum attack to be successful in generating entanglement between the two verifiers, it is sufficient look at only one of the two attackers locally\footnote{Attackers have to act in a coordinated way in QPV, but in particular each attacker also needs to have a local success probability at least as big as the global one.}. We will use this fact in our proof to show that they cannot perform this task perfectly using quantum communication.

\begin{figure}[ht]
\centering
\begin{subfigure}{0.45\textwidth}
    \centering
    \begin{tikzpicture}
        \filldraw [black] (0,0) circle (2pt) node[left] {$\mathsf{V_A}$};
        \draw[black] (0.707,0.707) circle (0pt) node[left=1mm] {$\scriptstyle{e_{A_1} \leq 1/2}$};
        \draw[black] (0.707,-0.707) circle (0pt) node[left=1mm] {$\scriptstyle{e_{A_2}}$};
        \draw[black] [,decorate,decoration=snake] (0,0) -- (1.25,1);
        \draw[black] [,decorate,decoration=snake] (0,0) -- (1.25,-1);
        \filldraw [black] (1.25,1) circle (2pt) node[above=1.5mm] {$\mathsf{A_1}$};
        
        \draw[lightgray] [,decorate,decoration=snake] (1.25,1) -- (1.25,-1);
        \draw[lightgray] [,decorate,decoration=snake] (1.75,1) -- (1.75,-1);
    
        \filldraw [black] (3,0) circle (2pt) node[right] {$\mathsf{V_B}$};
        \draw[black] (3-0.707,0.707) circle (0pt) node[right=1mm] {$\scriptstyle{e_{B_1}}$};
        \draw[black] (3-0.707,-0.707) circle (0pt) node[right=1mm] {$\scriptstyle{e_{B_2}}$};
        \draw[black] [,decorate,decoration=snake] (3,0) -- (1.75,1);
        \draw[black] [,decorate,decoration=snake] (3,0) -- (1.75,-1);
        \filldraw [black] (1.75,1) circle (2pt) node[above=1.5mm] {$\mathsf{B_1}$};
    
        \filldraw [black] (1.25,-1) circle (2pt) node[below=1.5mm] {$\mathsf{A_2}$};
        \filldraw [black] (1.75,-1) circle (2pt) node[below=1.5mm] {$\mathsf{B_2}$};
        
        \draw[QuSoft, thick] (1,-0.8) rectangle (2,-1.2);
        \draw[QuSoft, thick] (1,0.8) rectangle (2,1.2);
    \end{tikzpicture}
\caption{Entanglement structure when attackers measure and commit to an answer. W.l.o.g. $e_{A_1} \leq 1/2$.}
\label{fig:Entangl_AB_meas}
\end{subfigure}
\begin{subfigure}{0.45\textwidth}
    \centering
    \begin{tikzpicture}
        \filldraw [black] (0,-1) circle (2pt) node[left] {$\mathsf{V_A}$};
        \draw[black] [,decorate,decoration=snake] (0,-1) -- (3,-1);
        \filldraw [black] (1.25,0) circle (2pt) node[above=1.5mm] {$\mathsf{A_1}$};
    
        \filldraw [black] (3,-1) circle (2pt) node[right] {$\mathsf{V_B}$};
        \draw[black] [,decorate,decoration=snake] (1.25,0) -- (1.75,0);
        \filldraw [black] (1.75,0) circle (2pt) node[above=1.5mm] {$\mathsf{B_1}$};
        
        \draw[black] (1.5,-1) circle (0pt) node[above=1mm] {$\scriptstyle{e_{V_AV_B} \leq 1/2}$};
        
        \draw[black] (1.5,1) circle (0pt) node[above=2mm] {};
        \draw[white, thick] (1,-1.5) rectangle (2,-2);
    
    \end{tikzpicture}
    \caption{Measuring $\mathsf{A}_1 \mathsf{B}_1$ and sending the result to $\mathsf{V}_\mathsf{A}$ is an LOCC operation on $V_\mathsf{A}(\mathsf{A}_1 \mathsf{B}_1V_\mathsf{B})$.}
    \label{fig:Entangl_VAVB}
\end{subfigure}
\caption{Illustration of the argument based on monogamy of entanglement to bound the entanglement $e_{V_AV_B} \leq 1/2$ as described in the main text. Tracing out $\mathsf{A_2B_2}$, attacker $\mathsf{A}$ can only swap $e_{V_AV_B} \leq 1/2$ ebits to $\mathsf{V}_\mathsf{A} \mathsf{V}_\mathsf{B}$.}
\label{fig:QCmonogamy_arg}
\end{figure}
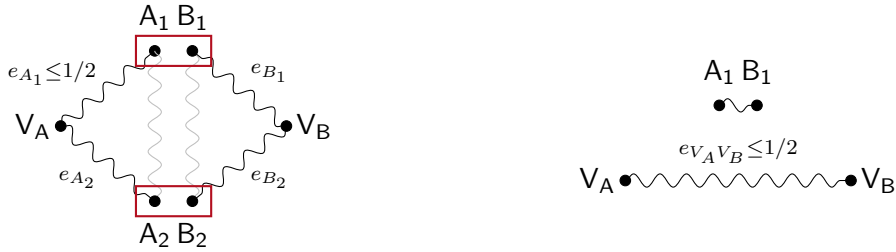

\noindent As an entanglement measure we will use the squashed entanglement. This measure satisfies several properties useful for our analysis, such as monotonicity under LOCC operations, general monogamy with no restrictions on the size of the quantum registers, and it is lower bounded by distillable entanglement \cite{christandl2004squashed}. Consider the sketch of the entanglement structure between all quantum registers in figure \ref{fig:QCmonogamy_arg}. By monogamy we have that 
\begin{align}
    0 \leq E_{sq}(V_A:A_1)_{\rho} + E_{sq}(V_A:A_2)_{\rho} \leq E_{sq}(V_A:A)_{\rho} = E_{sq}(\ket{\Phi^+} \bra{\Phi^+}) = 1
\end{align}
Suppose without loss of generality that $E_{sq}(V_A:A_1)_{\rho} \leq E_{sq}(V_A:A_2)_{\rho}$, then the inequality implies that $E_{sq}(V_A:A_1)_{\rho} \leq 1/2$. Let $\Phi$ be the LOCC operation (on $\mathsf{V_A (A_1 B_1 V_B)}$) of measuring the $\rho_{A_1B_1}$ register and sending the classical measurement result to $\mathsf{V_A}$. Using that squashed entanglement is monotone under LOCC we get the following 
\begin{align}
    E_{sq}(V_A : V_B)_{\Phi(\rho)} &\leq E_{sq}(V_A:A_1B_1V_B)_{\Phi(\rho)} \nn \\
    &\leq E_{sq}(V_A:A_1B_1V_B)_{\rho} \nn \\
    &= E_{sq}(V_A:A_1) \leq 1/2. \label{squashedboundhalf}
\end{align}
Thus, the squashed entanglement between the two verifiers after any attack using quantum communication is upper bounded by 1/2. Recall that for an attack to be successful, the verifiers must share the same Bell state on their registers $\rho_{V_AV_B}$ as the answer $i$ they receive from the attackers in this case single. Since $E_{sq}(\ket{B_i}\bra{B_i}) = 1$ it is immediately clear that $p_{\text{succ}} = \sum_i \Tr[\ket{B_i}\bra{B_i} \rho_{V_AV_B}^i]/4 < 1$ and no perfect attack is possible. 

Ideally we want $p_\text{succ}$ to not only be strictly smaller than $1$, but to be smaller than $1$ by some finite gap. In what follows we will show that $E_{sq}(V_A:V_B) \leq 1/2$ implies $p_{\text{succ}} \leq 0.926$, which implies security against quantum communication. Our proof uses the \textit{hashing bound} from \cite{devetak2005distillation} which lower bounds the squashed entanglement \cite{christandl2004squashed}. The inequality states that for any quantum state $\rho_{AB}$,
\begin{align} \label{eq:hasingbound}
    S(B)_\rho - S(AB)_\rho \leq E_{sq}(A:B)_\rho,
\end{align}
where $\rho_B = \Tr_A[\rho_{AB}]$, and $S$ is the von Neumann entropy.

The idea of this proof is to apply the \textit{Werner twirling channel} $\mathcal{W}$, where we integrate over the final two-qubit state between the verifiers. This channel leaves the antisymmetric (qubit) state invariant, and projects the remaining part to the symmetric subspace. Furthermore this channel is an LOCC channel and by monotonicity of the squashed entanglement under LOCC operations, we have
\begin{align} \label{eq:sqwernerentanglementbound}
    1/2 \geq E_{sq}(V_A:V_B)_{\Phi(\rho)} \geq E_{sq}(V_A:V_B)_{\mathcal{W}(\Phi(\rho))}. 
\end{align}
The resulting state $\mathcal{W}(\Phi(\rho)_{V_AV_B})$ can then be written as a mixture of the antisymmetric Bell state with the maximally mixed state characterized by some $\alpha > 0$, that is, 
\begin{align}
    \mathcal{W}(\Phi(\rho)_{V_AV_B}) = \alpha \ketbra{\Psi^-}{\Psi^-}+ (1-\alpha) \frac{\mathbbm{1}_4}{4}.    
\end{align}
A property of Bell states is that they can be locally transformed into one another. Therefore, verifiers can always locally change the Bell state that they receive as an answer from the honest prover to the antisymmetric Bell state. Therefore any successful attack can be characterized by the probability of having measurement outcome $\ket{\Psi^-}$ on $\mathcal{W}(\rho_{V_AV_B})$. Combining the entanglement bound \eqref{eq:sqwernerentanglementbound} with the hashing bound \eqref{eq:hasingbound} we get the following numerical bound on $\alpha$:
\begin{align}
    1/2 &\geq E_{sq}(V_A:V_B)_{\mathcal{W}(\Phi(\rho))} \nn \\
    &\geq S(B)_{\mathcal{W}(\Phi(\rho))} - S(AB)_{\mathcal{W}(\Phi(\rho))} \nn \\
    &= 1 - S(AB)_{\mathcal{W}(\Phi(\rho))} \nn \\
    \iff \ \ \ \ \ \alpha &\leq 0.902.
\end{align}
The probability of success is now upper bounded as follows
\begin{align}\label{equ:QCupper}
    p_{\text{succ}} &= \Tr\left[\ketbra{\Psi^-}{\Psi^-} \Phi(\rho)_{V_AV_B}\right] = \mathcal{F}(\ketbra{\Psi^-}{\Psi^-}, \Phi(\rho)_{V_AV_B}) \leq \mathcal{F}(\mathcal{W}(\ketbra{\Psi^-}{\Psi^-}), \mathcal{W}(\Phi(\rho)_{V_AV_B})) \nn \\
    &= \Tr\left[\ketbra{\Psi^-}{\Psi^-} \mathcal{W}(\rho_{V_AV_B})\right] = \alpha + \frac{1-\alpha}{4} \leq 0.926,
\end{align}
where we have used the data process inequality for the fidelity in the first inequality. This concludes our proof and shows there is a finite gap between the optimal attack attackers restricted to quantum communication can do and what an honest prover can do. We suspect that this gap can be made even larger, the upper bound that we find arises only due to restrictions on the $A_1$ part, the $B_1$ part could be unchanged from the input state of $B$. So our bound gives an expression for the maximal probability if you split the $A$ part in two parts but get the full $B$ part. Also we have not yet made use of the fact that both attackers have to answer equally. In the following section we try to improve the bound using a different approach making use of both the $A_1$ and the $A_2$ part.

\subsubsection{An improved bound via no-cloning}\label{sec:QCproofagainstBellstatediscrimination_nocloning}
In this section we will show that the optimal approximate cloning bound is strong enough to give a bound that improves \eqref{equ:QCupper} in several ways. This argument gives a bound for any local dimension $d$ of the input, has a clear operational interpretation and improves the above derived bound for $d=2$. For the start, assume that the attackers \textsf{A}, \textsf{B} have optimal local success probabilities $p_\text{succ}^\mathsf{A}, p_\text{succ}^\mathsf{B}$ respectively. Our argument will be that if these local success probabilities (and thus also the global attack) were too high, we would be able to construct a cloning procedure that violates the optimal approximate cloning bound. 

In a realistic attack on QPV$_{\textsf{Bell}}$ it is a requirement that the answers of \textsf{A} and \textsf{B} are identical in each round. In terms of success probabilities this means $p_\text{succ}^\mathsf{A}= p_\text{succ}^\mathsf{B} = p_\text{succ}^\mathsf{AB}$, the latter denoting the optimal realistic attack. Considering just the local success probabilities is a relaxation because it in principle allows for different responses from \textsf{A} and \textsf{B}. It is clear that 
\begin{align}
    p_\text{succ}^\mathsf{AB} \leq \min \left\{ p_\text{succ}^\mathsf{A}, p_\text{succ}^\mathsf{B} \right\},
\end{align}
and we will now proceed to upper bound $\min \left\{ p_\text{succ}^\mathsf{A}, p_\text{succ}^\mathsf{B} \right\}$. As previously mentioned, the structure of any attack (described at the start of section \ref{sec:splitting}) on QPV$_{\textsf{Bell}}$ must lead to an entanglement structure as drawn in Figure~\ref{fig:Entangl_AB_meas}. In particular, the verifiers themselves (or even just one of them) could create the situation of Figure~\ref{fig:Entangl_AB_meas} themselves in their own lab because they know what the optimal success strategies are. They could simply create the inputs and apply the optimal strategies of \textsf{A}, \textsf{B} to registers $A, B$. This splits $A \mapsto A_1 A_2$ and $B \mapsto B_1 B_2$. Imagine now that we add a third input state $\ket{\Phi_+}_{V_CC}$ in registers $V_CC$ and that the optimal strategy of \textsf{B} is applied to register $C$, mapping $C \mapsto C_1 C_2$. Note that the state in $C_1C_2$ is identical to the one in $B_1B_2$. This procedure creates the following entanglement structure:

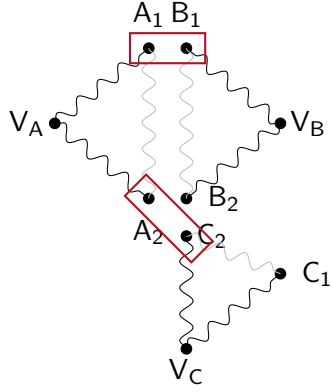
\begin{figure}[ht]
\centering
    \begin{tikzpicture}
        \filldraw [black] (0,0) circle (2pt) node[left] {$\mathsf{V_A}$};
        \draw[black] [,decorate,decoration=snake] (0,0) -- (1.25,1);
        \draw[black] [,decorate,decoration=snake] (0,0) -- (1.25,-1);
        \filldraw [black] (1.25,1) circle (2pt) node[above=1.5mm] {$\mathsf{A_1}$};
    
        \filldraw [black] (3,0) circle (2pt) node[right] {$\mathsf{V_B}$};
        \draw[black] [,decorate,decoration=snake] (3,0) -- (1.75,1);
        \draw[black] [,decorate,decoration=snake] (3,0) -- (1.75,-1);
        \filldraw [black] (1.75,1) circle (2pt) node[above=1.5mm] {$\mathsf{B_1}$};
        
        \filldraw [black] (1.75,-3) circle (2pt) node[below] {$\mathsf{V_C}$};
        \draw[black] [,decorate,decoration=snake] (1.75,-3) -- (1.75,-1.5);
        \draw[black] [,decorate,decoration=snake] (1.75,-3) -- (3,-2);
        \filldraw [black] (1.75,-1.5) circle (2pt) node[right] {$\mathsf{C_2}$};
    
        \filldraw [black] (1.25,-1) circle (2pt) node[below=1.5mm] {$\mathsf{A_2}$};
        \filldraw [black] (1.75,-1) circle (2pt) node[right=1.5mm] {$\mathsf{B_2}$};
        \filldraw [black] (3,-2) circle (2pt) node[right=1.5mm] {$\mathsf{C_1}$};
        
        \draw[lightgray] [,decorate,decoration=snake] (1.25,1) -- (1.25,-1);
        \draw[lightgray] [,decorate,decoration=snake] (1.75,1) -- (1.75,-1);
        \draw[lightgray] [,decorate,decoration=snake] (1.75,-1.5) -- (3,-2);

        \draw[QuSoft, thick] (1,0.8) rectangle (2,1.2);
        \draw[QuSoft, thick, rotate around={-45:(1.25,-1)}] (1,-0.8) rectangle (2.25,-1.2);
    \end{tikzpicture}
    \caption{Entanglement structure created by someone, say a trusted verifier, who generated three inputs and applied the optimal local attacker channel $\mathcal{A}$ of \textsf{A} on one and the analogous channel $\mathcal{B}$ of \textsf{B} on two of the three inputs. They then apply the optimal measurement of \textsf{A} on registers $A_1B_1$ and the one of \textsf{B} on registers $A_2C_2$. }
\label{fig:qc_upper_nocloning2}
\end{figure}

\noindent Afterwards, the optimal measurement of \textsf{A} is applied to registers $A_1B_1$ and the optimal measurement of \textsf{B} is applied to registers $A_2C_2$. This swaps entanglement to the registers $V_AV_B$ and $V_AV_C$, respectively. Tracing all other registers away, we end up with the structure depicted in Figure~\ref{fig:qc_upper_nocloning}.

\begin{figure}[h]
\centering
    \begin{tikzpicture}
        \filldraw [black] (0,0) circle (2pt) node[left] {$\mathsf{V_A}$};
        \draw[black] [,decorate,decoration=snake] (0,0) -- (2,1);
        \draw[black] [,decorate,decoration=snake] (0,0) -- (2,-1);
        
        \draw[black] (1,0.707) circle (0pt) node[above] {$\scriptstyle{\rho_{V_AV_B}}$};
        \draw[black] (1,-0.707) circle (0pt) node[below] {$\scriptstyle{\rho_{V_AV_C}}$};
    
        \filldraw [black] (2,1) circle (2pt) node[right] {$\mathsf{V_B}$};
        
        \filldraw [black] (2,-1) circle (2pt) node[right] {$\mathsf{V_C}$};
    
    \end{tikzpicture}
    \caption{Using the optimal attack strategies of \textsf{A} and \textsf{B} we may create this situation.}
\label{fig:qc_upper_nocloning}
\end{figure}
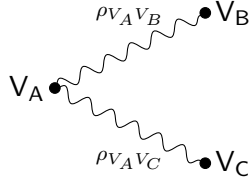

\noindent The state $\rho_{V_AV_B}$ corresponds to the state giving \textsf{A} her optimal local success probability in attacking QPV$_{\textsf{Bell}}$. This is because we applied the optimal split $A \mapsto A_1A_2$ and the optimal local measurement on $A_1B_1$ in order to remotely prepare $\rho_{V_AV_B}$. Likewise, the state $\rho_{V_AV_C}$ corresponds to the state giving \textsf{B} her optimal local success probability in attacking QPV$_{\textsf{Bell}}$. At this stage $\mathsf{V_A}$ could take some state $\ket{\phi}_{V_A'}$ and attempt to use the standard teleportation protocol to teleport it to \textit{both} $\mathsf{V_B}$ and $\mathsf{V_C}$. In general, this will result in some states $\rho_B^{\phi}$ and $\rho_C^{\phi}$ at $\mathsf{V_B}$ and $\mathsf{V_C}$, respectively. The average teleportation fidelity $f$ between the resultant state and the original one depends only on the maximally entangled fraction of the resource state $\rho$ \cite{HorodeckiImnperfectTeleportation}, and is given by
\begin{align}\label{equ:imperf_cloning}
    f = \frac{Fd+1}{d+1} = \frac{\braket{\Phi_+|\rho|\Phi_+}d+1}{d+1},
\end{align}
for any local dimension $d$. A consequence of optimal asymmetric $1 \to 2$ cloning is that the arithmetic mean of the average fidelities fulfills
\begin{align}\label{equ:avg_fidelities}
    \frac{f_\mathsf{V_B} + f_\mathsf{V_C}}{2} \leq \frac{5}{6},
\end{align}
no matter which resource states was used. Plugging in equation \eqref{equ:imperf_cloning} for each $f$ in \eqref{equ:avg_fidelities} yields
\begin{align}\label{equ:psuccavg}
    \frac{\braket{\Phi_+|\rho_{V_AV_B}|\Phi_+} + \braket{\Phi_+|\rho_{V_AV_C}|\Phi_+}}{2} \leq \frac{5}{6} - \frac{1}{6d}.
\end{align}
This allows us to bound the average success probability $(p_\text{succ}^\mathsf{A} + p_\text{succ}^\mathsf{B})/2$ as follows. Note that any Bell state $\ket{B_i}$ can be regarded as $\ket{\Phi_+}$ with a suitable local unitary $\mathds{1} \otimes U_i$ applied to the latter. We can thus write
\begin{align}
    \frac{p_\text{succ}^\mathsf{A} + p_\text{succ}^\mathsf{B}}{2} &= \frac{1}{d^2} \sum_{i=0}^{d^2-1} \frac{\braket{\Phi_+ | U_i^\dagger \rho_{V_AV_B}^{i} U_i | \Phi_+} + \braket{\Phi_+ | U_i^\dagger \rho_{V_AV_C}^{i} U_i | \Phi_+}}{2} \\
    &\leq \frac{1}{d^2} \sum_{i=0}^{d^2-1} \left( \frac{5}{6} - \frac{1}{6d} \right) \\
    &=\frac{5}{6} - \frac{1}{6d},
\end{align}
where the inequality follows from the fact that \eqref{equ:psuccavg} holds for \textit{any} states $\rho_{V_AV_B}$ and $\rho_{V_AV_C}$. Having an upper bound on the average success probability of \textsf{A} and \textsf{B} also tells us that at least one of $p_\textsf{succ}^\mathsf{A}$ and $p_\textsf{succ}^\mathsf{B}$ is upper bounded by the same bound. Hence we arrive at our end result
\begin{align}\label{equ:final_psucc_bound}
    p_\text{succ}^{\mathsf{AB}}(d) \leq \min \left\{ p_\text{succ}^\mathsf{A}(d), p_\text{succ}^\mathsf{B}(d) \right\} \leq \frac{p_\text{succ}^\mathsf{A}(d) + p_\text{succ}^\mathsf{B}(d)}{2} \leq \frac{5}{6} - \frac{1}{6d}.
\end{align}
We see that for $d=2$ the previously derived bound improves to $p_\text{succ}^{\mathsf{AB}}(2) \leq 3/4$. In other words, if one tries to discriminate Bell states of local dimension $d$ with only local operations and one round of simultaneous quantum communication (without pre-shared entanglement), the success probability is upper bounded as in \eqref{equ:final_psucc_bound}. \\
We still suspect that this bound is not tight because we relaxed the situation of QPV and only looked at the local success probabilities of \textsf{A} and \textsf{B}, and their average. In reality, however, \textsf{A} and \textsf{B} are forced to respond with the same answer. Our argument does not include this coordination\footnote{Our argument includes that both \textsf{A} and \textsf{B} have to answer, but their responses may differ.} and therefore we think that the realistic success probability $p_\text{succ}^{\mathsf{AB}}$ is only loosely upper bounded by \eqref{equ:final_psucc_bound}.

\subsection{Lossy Quantum Communication Attack on \texorpdfstring{QPV$_{\textsf{Bell}}$}{QPVbell}}
\label{sec:LossyQuantumCommunicationattackbound}

In the previous part we considered attacks where attackers were allowed to use quantum communication but were not allowed to answer loss. Allowing attackers to also answer loss would be the most general setting for attackers who can't pre-share entanglement. Unfortunately our previous proof does not hold in the lossy case, since the operations attackers can apply to their local quantum registers after quantum communication are not LOCC operations (between the local attacker registers and the corresponding verifier) if attackers are allowed to postselect, but rather SLOCC. Thus, the monotonicity of the squashed entanglement does not hold anymore. Similarly, as already pointed out in \cite{HorodeckiImnperfectTeleportation}, average teleportation fidelities may be higher than the bound \eqref{equ:imperf_cloning} if one allows post-selection. However, we still prove that there can be no perfect lossy quantum communication attack on the Bell state discrimination protocol, i.e. $p_{\text{succ}}(\eta)<1$ for any transmission rate $\eta \in (0,1]$.

\begin{figure}[ht]
\centering
    \begin{tikzpicture}
        \filldraw [black] (0,0) circle (2pt) node[left] {$\mathsf{V_A}$};
        \draw[black] [,decorate,decoration=snake] (0,0) -- (1.25,1);
        \draw[black] [,decorate,decoration=snake] (0,0) -- (1.25,-1);
        \filldraw [black] (1.25,1) circle (2pt) node[above=1.5mm] {$\mathsf{A_1}$};
    
        \filldraw [black] (3,0) circle (2pt) node[right] {$\mathsf{V_B}$};
        \draw[black] [,decorate,decoration=snake] (3,0) -- (1.75,1);
        \draw[black] [,decorate,decoration=snake] (3,0) -- (1.75,-1);
        \filldraw [black] (1.75,1) circle (2pt) node[above=1.5mm] {$\mathsf{B_1}$};
        
        \filldraw [black] (1.75,-3) circle (2pt) node[below] {$\mathsf{V_C}$};
        \draw[black] [,decorate,decoration=snake] (1.75,-3) -- (1.75,-1.5);
        \draw[black] [,decorate,decoration=snake] (1.75,-3) -- (3,-2);
        \filldraw [black] (1.75,-1.5) circle (2pt) node[right] {$\mathsf{C_2}$};
    
        \filldraw [black] (1.25,-1) circle (2pt) node[below=1.5mm] {$\mathsf{A_2}$};
        \filldraw [black] (1.75,-1) circle (2pt) node[right=1.5mm] {$\mathsf{B_2}$};
        \filldraw [black] (3,-2) circle (2pt) node[right=1.5mm] {$\mathsf{C_1}$};
        
        \draw[lightgray] [,decorate,decoration=snake] (1.25,1) -- (1.25,-1);
        \draw[lightgray] [,decorate,decoration=snake] (1.75,1) -- (1.75,-1);

        \draw[QuSoft, thick] (1,0.8) rectangle (2,1.2);
        \draw[QuSoft, thick, opacity=0.33, rotate around={-45:(1.25,-1)}] (1,-0.8) rectangle (2.25,-1.2);
        \draw[QuSoft, thick, opacity=0.33] (1,-0.8) rectangle (2,-1.2);
    \end{tikzpicture}
\caption{Entanglement structure including a third hypothetical verifier $\mathsf{V_C}$ and attacker $\mathsf{C}$ who applies an isometry $W_{C \to C_1C_2}$ to his hypothetical input.}
\label{fig:qc_no_lossy_attack}
\end{figure}
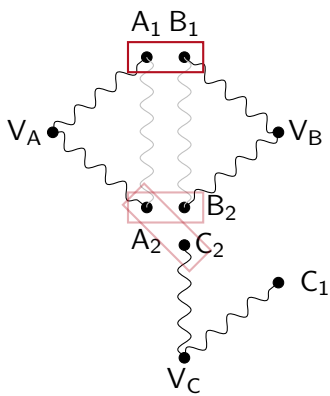

\noindent The argument goes as follows. Suppose there is a perfect lossy quantum communication attack, then for any loss rate there must be some moment where both attackers decide to play. Condition on this event taking place, and consider the moment before both attackers measure their quantum systems. Suppose we now perform the measurement on the $A_1B_1$ quantum system, then by assumption we must get some Bell state $i$ (indicating the correct one) as a measurement outcome with probability $1$, thus generating a maximally entangled state between the verifiers $V_A, V_B$. Now consider the possibility of another verifier $V_C$ who also sends as an input a half of an EPR pair, and some attacker $C$ who applies the exact same splitting operation on her input as attacker $B$ (see figure \ref{fig:qc_no_lossy_attack}), thus locally $\rho_{B_2} = \rho_{C_2}$. By definition of a QPV protocol the measurement outcome we get from $A_2,B_2$ will be $i$ with probability $1$. 

What would now happen if we apply the measurement of the attackers on the $A_2, C_2$ system? If this measurement is conclusive then it must be correct with probability 1, thus creating a maximally entangled state between $V_A, V_C$, which would violate monogamy of entanglement since $V_A$ is already maximally entangled with $V_B$. We conclude that the only possible measurement outcome on the $A_2, C_2$ system is an inconclusive `$\varnothing$' outcome. 

Crucially, the measurement on $\rho_{A_2B_2}$ will always be conclusive while the measurement on $\rho_{A_2C_2} = \rho_{A_2} \otimes \rho_{C_2} = \rho_{A_2} \otimes \rho_{B_2}$ will always be inconclusive. Using this fact we can perfectly distinguish the state $\rho_{A_2B_2}$ from  $\rho_{A_2} \otimes \rho_{B_2}$, by simply applying the measurement an attacker would apply. However, the states $\rho_{A_2B_2}$ and $\rho_{A_2} \otimes \rho_{B_2}$ are never orthogonal to each other, thus there can be no procedure that perfectly discriminates the two quantum states without saying loss. This contradicts our findings, because the above is a hypothetical procedure that perfectly distinguishes the two quantum states. 

We conclude that our assumption of a perfect lossy quantum communication attack must be wrong, which proves our claim. Thus we see that in general the QPV$_{\textsf{Bell}}$ cannot be perfectly attacked by unentangled attackers, which is the strongest statement we can make if we require attackers to not pre-share any entanglement. This argument cannot be extended to a finite gap in the attacking probability because of the subtlety that the measurements on $\rho_{A_1B_1}$ and $\rho_{A_2C_2}$ can be correlated. If attackers are allowed to make some errors, then the measurements on these states can be correlated such that they can be both conclusive but the measurement on $\rho_{A_2C_2}$ will then just be wrong, so monogamy of entanglement may not be violated. 

\subsection{Considerations on loss tolerance in QPV}
\label{sec:losstolerance}
Ideally, for QPV to become feasible, one would like to have a protocol that is fully loss tolerant, secure against attackers being able to pre-share a bounded amount of entanglement and to use quantum communication between them. So far, there is no such protocol. Here we give a no-go result, based on a simple observation. We show that no such protocol, fulfilling all three of the above properties, can exist. However, not all is lost as in practice one may be able to achieve \textit{good enough} partial loss tolerance, for example by increasing the quantum input dimension or the number of possible quantum operations \textsf{P} can apply. For simplicity, consider the following quite general two-verifier QPV protocol\footnote{The result that follows can be straightforwardly generalized to $m$ verifiers, for which a general attack would be to teleport all quantum inputs to one fixed attacker, who then performs the guessing attack. In that case, the probability that no teleportation needs corrections is much lower, i.e.\ $1/d^{2(m-1)}$.}:
\begin{itemize}
    \item Verifiers $\mathsf{V_A}, \mathsf{V_B}$ send $d_A, d_B$ dimensional quantum inputs, respectively, to \textsf{P}. They also send classical information $x, y$ (of any size), respectively.
    \item \textsf{P} computes a function $f(x,y)$ and, based on that result, applies a quantum operation $\mathcal{Q}_{f(x,y)}$ to the inputs. This yields two outputs, one intended for $\mathsf{V_A}$ and one for $\mathsf{V_B}$. These outputs are forwarded to the corresponding verifier.
    \item The verifiers check if what they received matches the specific honest protocol and that the responses arrived in time.
\end{itemize}
We will denote this as $\text{QPV}(d_A, d_B, f)$ and the protocol is to be repeated for $n$ rounds, either sequentially or in parallel. Now let $k \coloneqq |\operatorname{Im}(f)|$ be the number of possible quantum operations to be applied at \textsf{P}. It turns out that there always exists a perfect attack consuming at most $O(n\log d)$ (qubit) EPR pairs, where $d = \max\{ d_A, d_B \}$, as long as the loss is high enough. We make this precise in the following statement.
\begin{proposition}
Let $d = \max\{ d_A, d_B \}$ and $k = |\operatorname{Im}(f)|$. Any $n$-round $\emph{QPV}(d_A, d_B, f)$ protocol can be attacked with $\tilde{O}(n)$ EPR pairs if the fraction $\eta$ of rounds that is used for security analysis fulfills $\eta \leq \frac{1}{k d^2}$.
\end{proposition}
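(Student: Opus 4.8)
The plan is to exhibit an explicit teleportation‑based attack using $\tilde{O}(n)$ pre‑shared EPR pairs in total, with the property that on each round the two attackers either reproduce the honest prover's input–output behaviour exactly, or jointly declare ``loss''; the point will be that the fraction of non‑lost rounds is at least $1/(kd^2)$, so a protocol whose soundness analysis only relies on a fraction $\eta \le 1/(kd^2)$ of rounds cannot distinguish this from an honest lossy prover.

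Fix a single round and let $\mathsf{B}$ (the attacker next to $\mathsf{V_B}$) be the \emph{simulator}. First, $\mathsf{A}$ teleports her $d_A$‑dimensional quantum input into a register held by $\mathsf{B}$, using $\lceil \log d_A \rceil$ of the pre‑shared EPR pairs and recording her Bell‑measurement outcome $m$. In the single simultaneous round of classical communication $\mathsf{A}$ sends $\mathsf{B}$ the pair $(m,x)$ (messages in this round are unrestricted in length), while $\mathsf{B}$ sends $\mathsf{A}$ his string $y$. Then $\mathsf{B}$ undoes the teleportation correction, recovers $\mathsf{A}$'s input exactly, computes $f(x,y)$, applies the honest operation $\mathcal{Q}_{f(x,y)}$ to the joint state, and forwards the part destined for $\mathsf{V_B}$ straight to $\mathsf{V_B}$. (If one insists on not relaying a possibly huge classical string, $\mathsf{B}$ can instead guess $f(x,y)$ uniformly among its $k=|\operatorname{Im}(f)|$ possible values, with $\mathsf{A}$ independently checking the guess from $x$ and the relayed $y$; this variant is where the factor $k$ in the statement comes from.)

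The crux, and the reason only a single round of communication is available, is that there is no time left for $\mathsf{B}$ to ship the part of the output destined for $\mathsf{V_A}$ back to $\mathsf{A}$ in the ordinary way. Instead $\mathsf{B}$ teleports it into a register of $\mathsf{A}$ with a further $\lceil \log d_A \rceil$ pre‑shared EPR pairs; $\mathsf{A}$'s half is updated instantly, but $\mathsf{B}$'s correction outcome $c$ can no longer be transmitted, so $\mathsf{A}$ applies the identity correction and forwards the result to $\mathsf{V_A}$. This is exactly correct iff $c$ is the trivial outcome, an event of probability $1/d_A^2 \ge 1/d^2$. Since $\mathsf{B}$ knows $c$, he knows whether the round succeeded; when it does not, he sends ``loss'' to $\mathsf{V_B}$ instead of the output — and because a round in which some verifier fails to receive a valid timely response is unusable in QPV, the whole round is discarded no matter what reached $\mathsf{V_A}$. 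One should check on the space‑time diagram that, since the teleportation measurement acts on the distant half with no delay, this entire procedure — relay, local honest computation, remote preparation of $\mathsf{V_A}$'s output, immediate forwarding by $\mathsf{A}$ — still fits inside one simultaneous communication round and all responses arrive on time.

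Finally, the bookkeeping: per round the attackers spend $2\lceil \log d \rceil$ EPR pairs, hence $\tilde{O}(n)$ in total, and they are statistically perfect on a fraction at least $1/(kd^2)$ of the rounds and ``lost'' on the rest (using shared randomness they can discard extra rounds so that the non‑loss fraction is exactly $\eta$). Thus whenever $\eta \le 1/(kd^2)$ the verifiers accept. I expect the only genuinely delicate points to be (i) arguing that the loss‑claiming is sound — one must ensure that the attacker(s) able to detect a failed round are exactly the one(s) whose withheld response invalidates it, which is what forces the asymmetric ``one simulator'' structure above; and (ii) pinning down the timing argument of the previous paragraph. The extension to $m>2$ verifiers funnels the $m-1$ inputs to one attacker and is handled by the same idea, with the success probability degrading as indicated in the footnote.
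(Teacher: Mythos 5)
Your attack is sound and proves the proposition, but it is genuinely not the paper's attack. The paper post-selects on the \emph{forward} teleportation: $\mathsf{A}$ teleports her input to $\mathsf{B}$, and with probability $1/d^2$ no correction is needed, so $\mathsf{B}$ holds both inputs \emph{before} $\mathsf{P}$ would have received them; since $\mathsf{B}$ does not yet know $x$ at that moment, he must \emph{guess} $f(x,y)$ (probability $1/k$), apply $\mathcal{Q}_{f(x,y)}$ immediately, and use his single message to $\mathsf{A}$ to carry the $\mathsf{V_A}$-output in the ordinary way. This costs $n\log d$ EPR pairs and gives answer rate $1/(kd^2)$. You instead fully correct the forward teleportation, let $\mathsf{B}$ compute $f(x,y)$ exactly after receiving $(m,x)$, and post-select on the \emph{return} teleportation of the $\mathsf{V_A}$-output having trivial correction; this doubles the entanglement cost but removes the $1/k$ guessing factor, giving the better rate $1/d^2$ --- both versions satisfy the proposition. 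The one point where your route is weaker is exactly the delicate point (i) you flag: in your scheme the failure event (nontrivial correction $c$) is generated at $\mathsf{B}$ \emph{after} the communication round, so only $\mathsf{B}$ can declare loss, while $\mathsf{A}$ has already irrevocably committed a response (now a Pauli-corrupted state) to $\mathsf{V_A}$. Whether a round with an answer at one verifier and loss at the other is silently discarded, or counted as evidence of cheating, is a modeling choice; your assertion that such a round is ``unusable no matter what reached $\mathsf{V_A}$'' is defensible (an honest prover with independent lossy return channels produces the same pattern) but is an extra assumption on the protocol. The paper's structure is engineered precisely so that both post-selection events (trivial correction of the forward teleportation, correct guess of $f$) become known to \emph{both} attackers from the single exchanged messages, letting them declare loss consistently on both sides. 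If you want your variant to be airtight in that stricter model, fall back to forward post-selection as in the paper, or explicitly assume the verifiers discard one-sided-loss rounds.
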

\begin{proof}
Without loss of generality, let attacker \textsf{A} receive classical input $x$. As soon as \textsf{A} receives his quantum input, quantum teleportation \cite{Teleportation} can be used to teleport the state to \textsf{B} (consuming a $d = \max\{ d_A, d_B \}$ dimensional maximally entangled state\footnote{Attackers do not know the dimension of their local inputs a priori.}), after which \textsf{A} sends to \textsf{B} which teleportation corrections were to apply and the classical information $x$. With probability $p_{00}=1/d^2$ there are no teleportation corrections to apply, in which case \textsf{B} holds both input states locally and before the honest party \textsf{P} would have received them. \textsf{B} can guess the value of $f(x,y)$ and immediately apply the operation \textsf{P} was asked to apply, send the part (e.g.\ a subsystem or a measurement result) that $\mathsf{V_A}$ is supposed to receive to \textsf{A} and keep the part that $\mathsf{V_B}$ is supposed to receive. With probability $1/k$ attacker \textsf{B} guesses the value of $f(x,y)$ correctly. If the quantum state picked up teleportation corrections in the first place or if it turns out that \textsf{B} guessed $f(x,y)$ wrongly (of which both get to know as soon as they receive the communication from the other attacker), attackers deny to answer and both send the corresponding loss symbol `$\varnothing$'. If there were no corrections to apply and \textsf{B} guessed $f(x,y)$ correctly, both send their respective parts that the verifiers are supposed to receive to them. As they are only required to answer $\eta \leq \frac{1}{k d^2}$ of all rounds, they can simply choose those perfect rounds without teleportation corrections and a correct guess for $f(x,y)$. Applying this strategy in each round costs them $n \log d$ (qubit) EPR pairs.
\end{proof}

\noindent The statement shines light on another facet of loss tolerance -- the attacker's ability to post-select on a ``correct'' guess after communication. They can always do that, if they pre-share entanglement, by simply guessing the teleportation corrections. In protocols with quantum input from one side and classical information determining the action of \textsf{P}, like \cite{kent_quantum_2011,chakraborty_practical_2015, unruh_quantum_2014,junge2021geometry,bluhm2021position}, attackers can, even without pre-shared entanglement, do this post-selection simply by first guessing $f(x,y)$, then applying the operation $\mathcal{Q}_{f(x,y)}$ on the quantum input and communicating $x,y$ to each other so that both attackers know if the initial guess was correct or not. This is captured in the above bound, identifying all-classical input from one side with $d=1$. In that sense, QPV protocols containing classical input cannot be fully loss-tolerant. It is now also clearer why the protocol in \cite{lim_loss-tolerant_2016} and our QPV$_\textsf{SWAP}$ \cite{AllBuhSpeVer22_swap} are fully loss-tolerant if attackers do not pre-share any entanglement. Without shared entanglement, the attackers simply have no way of ever knowing if their guess was correct, because there is no information about it leaving the verifiers.

We conclude that there does not exist a fully loss tolerant QPV protocol of the above type (in particular, if a linear amount of entanglement is pre-shared). The best we can achieve in all generality against entangled attackers is partial loss tolerance. By increasing $k$ and/or $d$ such that $kd^2 > 1/\eta$, however, partial loss tolerance \textit{good enough for all practical purposes} could be achieved. We will therefore need to bound \textit{both} the pre-shared entanglement of the attackers and the loss for a practical QPV protocol. 

\section{Conclusion}

We gave an explicit example of the first QPV protocol in which there is an advantage for attackers who share no entanglement to use quantum communication over classical communication. The protocol depends on determining whether two states were either both symmetric or both antisymmetric. The probability of success under LOCC operations was analytically shown to be equal to 17/18, while attackers with access to quantum communication could attack this protocol perfectly with a single round of simultaneous quantum communication. This suggests that the role of quantum communication can be more important than previously thought. 

Diving into the idea that there can be an advantage to using quantum communication over classical communication we also showed that this separation between quantum communication and classical communication was somewhat constructed. The existence of a quantum communication attack on a protocol that is safe against attackers restricted to classical communication implies the existence of a similar protocol that is safe against quantum communication. In order to prove this we showed that for every quantum communication attack, two new QPV protocols arise. By repeating this argument whenever there was a quantum communication attack, we used the theorem of emergent classicality of channels from \cite{qi2020emergent} to show that the quantum inputs become approximately classical, which in turn implies the existence of a classical communication attack on the original protocol which violated the assumption of the original being secure against classical communication. Thus, ultimately showing that somewhere in the recursion there must have been a protocol that was secure against attackers allowed to use a single round of simultaneous quantum communication.

Setting out to find an explicit example of a protocol that is safe against quantum communication, we prove that the task of Bell state discrimination cannot have a higher success probability than $3/4$ when two attackers are restricted to local operations and a single round of quantum communication. This is higher than the optimal success probability of $1/2$ for attackers restricted to classical communication. We do suspect the quantum communication bound to be lower than $3/4$, but have been unable to make the bound tighter just yet. Nonetheless, this is the first example of a protocol that is fully loss-tolerant against classical communication and stays secure against attackers using quantum communication.

The previous statement immediately brings up the question whether the protocol is also safe against attackers allowed to use quantum communication, and allowed to say loss. We answer this question in the affirmative and prove that there cannot be a strategy that perfectly distinguishes all Bell states. However, we only find a strict inequality, i.e. that $p_{\text{succ}}^{\text{qc}}(\eta) < 1$ for any transmission rate $\eta \in (0,1]$. An interesting follow-up question is whether we can make this bound into a finite gap, just as with the case where the attackers were not allowed to say loss but could use quantum communication. Nonetheless, this is the first example of a QPV protocol that remains fully loss tolerant and secure even in the quantum communication setting. This is of interest since it is the most general setting in the case where we do not allow attackers to pre-share entanglement. 

Extending on the idea of incorporating loss in different settings for attackers we note that in order for QPV to become feasible, not only do we want our protocols to be loss-tolerant against quantum communication, but we also want our protocols to be loss-tolerant against some amount of pre-shared entanglement among the attackers. However, we show that any QPV protocol can be attacked by only a linear amount of entanglement in this setting, given that the loss is high enough. This is based on the simple observation that in such a scenario attackers can post-select on those rounds in which the attempted quantum teleportation did not incur teleportation corrections.

In conclusion, in this paper we investigated the advantage of a single round of simultaneous quantum communication over classical communication in QPV attacks. We find that this advantage can be strictly better than classical communication, but we also find that in some cases it is still strictly worse than the best non-local operations, which proves security for those cases. 

\paragraph{Acknowledgments}
We would like to thank Freek Witteveen, Wolfgang L\"offler and Kirsten Kanneworff for many useful discussions. RA and HB were supported by the Dutch Research Council (NWO/OCW), as part of the Quantum Software Consortium programme (project number 024.003.037). PVL and HB were supported by the Dutch Research Council (NWO/OCW), as part of the NWO Gravitation Programme Networks (project number 024.002.003).

\pagebreak

\bibliographystyle{alpha}
\bibliography{HOM_folder, QPV_folder, ZoteroReferences}

\newcommand{\etalchar}[1]{$^{#1}$}
\begin{thebibliography}{BCWdW01}

\bibitem[ABSVL22]{AllBuhSpeVer22_swap}
Rene Allerstorfer, Harry Buhrman, Florian Speelman, and Philip Verduyn~Lunel.
\newblock Towards practical and error-robust quantum position verification.
\newblock Manuscript in preparation, August 2022.

\bibitem[BBC{\etalchar{+}}93]{Teleportation}
Charles~H. Bennett, Gilles Brassard, Claude Cr\'epeau, Richard Jozsa, Asher
  Peres, and William~K. Wootters.
\newblock Teleporting an unknown quantum state via dual classical and
  einstein-podolsky-rosen channels.
\newblock {\em Phys. Rev. Lett.}, 70:1895--1899, Mar 1993.

\bibitem[BCF{\etalchar{+}}11]{buhrman_position-based_2011}
Harry Buhrman, Nishanth Chandran, Serge Fehr, Ran Gelles, Vipul Goyal, Rafail
  Ostrovsky, and Christian Schaffner.
\newblock Position-{Based} {Quantum} {Cryptography}: {Impossibility} and
  {Constructions}.
\newblock In Phillip Rogaway, editor, {\em Advances in {Cryptology} –
  {CRYPTO} 2011}, pages 429--446, Berlin, Heidelberg, 2011. Springer Berlin
  Heidelberg.

\bibitem[BCS21]{bluhm2021position}
Andreas Bluhm, Matthias Christandl, and Florian Speelman.
\newblock Position-based cryptography: Single-qubit protocol secure against
  multi-qubit attacks.
\newblock {\em arXiv preprint arXiv:2104.06301}, 2021.

\bibitem[BCWdW01]{buhrman_quantum_2001}
Harry Buhrman, Richard Cleve, John Watrous, and Ronald de~Wolf.
\newblock Quantum fingerprinting.
\newblock {\em Physical Review Letters}, 87(16):167902, September 2001.
\newblock arXiv: quant-ph/0102001.

\bibitem[BFSS13]{buhrman2013garden}
Harry Buhrman, Serge Fehr, Christian Schaffner, and Florian Speelman.
\newblock The garden-hose model.
\newblock In {\em Proceedings of the 4th conference on Innovations in
  Theoretical Computer Science}, pages 145--158, 2013.

\bibitem[BK11]{beigi_simplified_2011}
Salman Beigi and Robert Koenig.
\newblock Simplified instantaneous non-local quantum computation with
  applications to position-based cryptography.
\newblock {\em New Journal of Physics}, 13(9):093036, September 2011.
\newblock arXiv: 1101.1065.

\bibitem[BKMS06]{KentPatent2006}
Raymond~G. Beausoleil, Adrian Kent, William~J. Munro, and Timothy~P. Spiller.
\newblock Tagging systems, US patent 7075438, 2006.

\bibitem[BRSdW11]{buhrman_near-optimal_2011}
Harry Buhrman, Oded Regev, Giannicola Scarpa, and Ronald de~Wolf.
\newblock Near-{Optimal} and {Explicit} {Bell} {Inequality} {Violations}.
\newblock In {\em 2011 {IEEE} 26th {Annual} {Conference} on {Computational}
  {Complexity}}, pages 157--166, San Jose, CA, USA, June 2011. IEEE.

\bibitem[CGMO09]{chandran_position_2009}
Nishanth Chandran, Vipul Goyal, Ryan Moriarty, and Rafail Ostrovsky.
\newblock Position {Based} {Cryptography}.
\newblock In Shai Halevi, editor, {\em Advances in {Cryptology} - {CRYPTO}
  2009}, Lecture {Notes} in {Computer} {Science}, pages 391--407, Berlin,
  Heidelberg, 2009. Springer.

\bibitem[CL15]{chakraborty_practical_2015}
Kaushik Chakraborty and Anthony Leverrier.
\newblock Practical {Position}-{Based} {Quantum} {Cryptography}.
\newblock {\em Physical Review A}, 92(5):052304, November 2015.
\newblock arXiv: 1507.00626.

\bibitem[CW04]{christandl2004squashed}
Matthias Christandl and Andreas Winter.
\newblock “squashed entanglement”: an additive entanglement measure.
\newblock {\em Journal of mathematical physics}, 45(3):829--840, 2004.

\bibitem[Dol19]{dolev2019}
Kfir Dolev.
\newblock Constraining the doability of relativistic quantum tasks.
\newblock {\em arXiv preprint arXiv:1909.05403}, 2019.

\bibitem[DW05]{devetak2005distillation}
Igor Devetak and Andreas Winter.
\newblock Distillation of secret key and entanglement from quantum states.
\newblock {\em Proceedings of the Royal Society A: Mathematical, Physical and
  engineering sciences}, 461(2053):207--235, 2005.

\bibitem[GC20]{gonzales_bounds_2020}
Alvin Gonzales and Eric Chitambar.
\newblock Bounds on {Instantaneous} {Nonlocal} {Quantum} {Computation}.
\newblock {\em IEEE Transactions on Information Theory}, 66(5):2951--2963, May
  2020.
\newblock arXiv: 1810.00994.

\bibitem[GLW13]{gao2013}
Fei Gao, Bin Liu, and Qiao-Yan Wen.
\newblock Enhanced no-go theorem for quantum position verification.
\newblock {\em arXiv preprint arXiv:1305.4254}, 2013.

\bibitem[HHH99]{HorodeckiImnperfectTeleportation}
Micha\l{} Horodecki, Pawe\l{} Horodecki, and Ryszard Horodecki.
\newblock General teleportation channel, singlet fraction, and
  quasidistillation.
\newblock {\em Phys. Rev. A}, 60:1888--1898, Sep 1999.

\bibitem[JKPPG21]{junge2021geometry}
Marius Junge, Aleksander~M Kubicki, Carlos Palazuelos, and David
  P{\'e}rez-Garc{\'\i}a.
\newblock Geometry of banach spaces: a new route towards position based
  cryptography.
\newblock {\em arXiv preprint arXiv:2103.16357}, 2021.

\bibitem[KMS11]{kent_quantum_2011}
Adrian Kent, William~J. Munro, and Timothy~P. Spiller.
\newblock Quantum {Tagging}: {Authenticating} {Location} via {Quantum}
  {Information} and {Relativistic} {Signalling} {Constraints}.
\newblock {\em Physical Review A}, 84(1):012326, July 2011.
\newblock arXiv: 1008.2147.

\bibitem[LL11]{lau_insecurity_2011}
Hoi~Kwan Lau and Hoi~Kwong Lo.
\newblock Insecurity of position-based quantum cryptography protocols against
  entanglement attacks.
\newblock {\em Physical Review A}, 83(1):012322, January 2011.
\newblock arXiv: 1009.2256.

\bibitem[LXS{\etalchar{+}}16]{lim_loss-tolerant_2016}
Charles Ci~Wen Lim, Feihu Xu, George Siopsis, Eric Chitambar, Philip~G. Evans,
  and Bing Qi.
\newblock Loss-tolerant quantum secure positioning with weak laser sources.
\newblock {\em Physical Review A}, 94(3):032315, September 2016.
\newblock arXiv: 1607.08193.

\bibitem[Mal10a]{Malaney2010}
Robert~A. Malaney.
\newblock Location-dependent communications using quantum entanglement.
\newblock {\em Phys. Rev. A}, 81:042319, Apr 2010.

\bibitem[Mal10b]{MalaneyNoisy2010}
Robert~A. Malaney.
\newblock Quantum location verification in noisy channels.
\newblock In {\em 2010 IEEE Global Telecommunications Conference GLOBECOM
  2010}, pages 1--6, 2010.

\bibitem[OCCG20]{olivo_breaking_2020}
Andrea Olivo, Ulysse Chabaud, André Chailloux, and Frédéric Grosshans.
\newblock Breaking simple quantum position verification protocols with little
  entanglement.
\newblock {\em arXiv:2007.15808 [quant-ph]}, July 2020.
\newblock arXiv: 2007.15808.

\bibitem[QLL{\etalchar{+}}15]{qi_free-space_2015}
Bing Qi, Hoi-Kwong Lo, Charles Ci~Wen Lim, George Siopsis, Eric~A. Chitambar,
  Raphael Pooser, Philip~G. Evans, and Warren Grice.
\newblock Free-space reconfigurable quantum key distribution network.
\newblock {\em 2015 IEEE International Conference on Space Optical Systems and
  Applications (ICSOS)}, pages 1--6, October 2015.
\newblock arXiv: 1510.04891.

\bibitem[QR20]{qi2020emergent}
Xiao-Liang Qi and Daniel Ranard.
\newblock Emergent classicality in general multipartite states and channels.
\newblock {\em arXiv preprint arXiv:2001.01507}, 2020.

\bibitem[QS15]{qi_loss-tolerant_2015}
Bing Qi and George Siopsis.
\newblock Loss-tolerant position-based quantum cryptography.
\newblock {\em Physical Review A}, 91(4):042337, April 2015.
\newblock arXiv: 1502.02020.

\bibitem[RG15]{ribeiro_tight_2015}
Jérémy Ribeiro and Frédéric Grosshans.
\newblock A {Tight} {Lower} {Bound} for the {BB84}-states
  {Quantum}-{Position}-{Verification} {Protocol}.
\newblock {\em arXiv:1504.07171 [quant-ph]}, June 2015.
\newblock arXiv: 1504.07171.

\bibitem[Spe16]{speelman2016}
Florian Speelman.
\newblock {Instantaneous Non-Local Computation of Low T-Depth Quantum
  Circuits}.
\newblock In Anne Broadbent, editor, {\em 11th Conference on the Theory of
  Quantum Computation, Communication and Cryptography (TQC 2016)}, volume~61 of
  {\em Leibniz International Proceedings in Informatics (LIPIcs)}, pages
  9:1--9:24, Dagstuhl, Germany, 2016. Schloss Dagstuhl--Leibniz-Zentrum fuer
  Informatik.

\bibitem[TFKW13]{tomamichel_monogamy--entanglement_2013}
Marco Tomamichel, Serge Fehr, Jędrzej Kaniewski, and Stephanie Wehner.
\newblock A {Monogamy}-of-{Entanglement} {Game} {With} {Applications} to
  {Device}-{Independent} {Quantum} {Cryptography}.
\newblock {\em New Journal of Physics}, 15(10):103002, October 2013.
\newblock arXiv: 1210.4359.

\bibitem[Unr14]{unruh_quantum_2014}
Dominique Unruh.
\newblock Quantum {Position} {Verification} in the {Random} {Oracle} {Model}.
\newblock In Juan~A. Garay and Rosario Gennaro, editors, {\em Advances in
  {Cryptology} – {CRYPTO} 2014}, pages 1--18, Berlin, Heidelberg, 2014.
  Springer Berlin Heidelberg.

\end{thebibliography}

\begin{appendices}
\section{Optimal PPT Measurements for \texorpdfstring{QPV$_{\text{Sym/Antisym}}$}{qpvsymmasymm}} \label{AppendixQCbetter}
 
In this section we will solve the SDP program that optimizes the probability of success for two adversaries restricted to LOCC operations of discriminating a random symmetric state from the antisymmetric state. The SDP formulation of this protocol is as follows: 

\begin{minipage}{0.48\textwidth}
\begin{align*}
    &\textbf{Primal Program}\\
    \textbf{maximize: } &\frac{1}{2} \Tr[ \Pi_0 \rho_0 + \Pi_1 \rho_1] \\
    \textbf{subject to: } &\Pi_0 + \Pi_1 = \mathbbm{1}_{2^2} \\
    & \Pi_k \in \text{PPT}(\mathsf{A}:\mathsf{B}), \ \ \ k \in \{0,1\} \\[1ex]
\end{align*}
\end{minipage}
\begin{minipage}{0.48\textwidth}
\begin{align*}
    &\textbf{Dual Program} \\
    \textbf{minimize: } &\Tr[Y] \\
    \textbf{subject to: } &Y - Q^{T_\mathsf{B}}_{i} - \rho_i / 2 \succeq 0, \ \ \ i \in \{0,1\} \\
    & Y\in \text{Herm}(\mathsf{A} \otimes \mathsf{B}) \\
    & Q_i \in \text{Pos}(\mathsf{A}, \mathsf{B}), \ \ \ i \in \{0,1\}. \\
\end{align*}
\end{minipage}
Where we write $\rho_0$ for $\rho_{\text{sym}}$ and $\rho_1$ for $\rho_{\text{antisym}}$ whose respective density matrices are:
\begin{align*}
    &\rho_{0} = \frac{1}{6} \left( \begin{matrix} 2 & 0 & 0 & 0 \\ 0 & 1 & 1 & 0 \\ 0 & 1 & 1 & 0 \\ 0 & 0 & 0 & 2 \end{matrix} \right),  &\rho_{1} = \frac{1}{2} \left( \begin{matrix} 0 & 0 & 0 & 0 \\ 0 & 1 & -1 & 0 \\ 0 & -1 & 1 & 0 \\ 0 & 0 & 0 & 0 \end{matrix} \right).
\end{align*}
A feasible solution for the primal program, is to measure both states in the computational basis and answering the XOR of the measurement outcomes, so $\Pi_0 = \ket{00}\bra{00} + \ket{11}\bra{11}, \Pi_1 = \ket{01}\bra{01} + \ket{10}\bra{10}$. This strategy has success probability $\frac{1}{2}\Tr[\Pi_0 \rho_0 + \Pi_1 \rho_1] = 5/6$. A feasible solution to the the corresponding dual is
\begin{align*}
    Y = \left( \begin{matrix} \frac{1}{6} & 0 & 0 & 0 \\ 0 & \frac{1}{4} & -\frac{1}{12} & 0 \\ 0 & -\frac{1}{12} & \frac{1}{4} & 0 \\ 0 & 0 & 0 & \frac{1}{6} \end{matrix} \right), &&Q_0 = 0 \succeq 0, &&Q_1 = \frac{1}{6}\left( \begin{matrix} 1 & 0 & 0 & 1 \\ 0 & 0 & 0 & 0 \\ 0 & 0 & 0 & 0 \\ 1 & 0 & 0 & 1 \end{matrix} \right) = \frac{1}{3} \ket{\Phi^+}\bra{\Phi^+} \succeq 0.
\end{align*}
Where $\Tr[Y] = 5/6$ is a lower bound of the success probability of the protocol optimized over all PPT measurements. Thus we see that the highest probability of success for adversaries restricted to LOCC operations is $5/6$. \\

Now for the protocol where we double the input rounds, but restrict the inputs to be either both symmetric or antisymmetric states we will show that there is no perfect LOCC attack. The corresponding SDP that optimizes over all PPT strategies looks as follows:

\begin{minipage}{0.48\textwidth}
\begin{align*}
    &\textbf{Primal Program}\\
    \textbf{maximize: } &\frac{1}{2} \Tr[ \Pi_0 (\rho_0 \otimes \rho_0) + \Pi_1 (\rho_1 \otimes \rho_1)] \\
    \textbf{subject to: } &\Pi_0 + \Pi_1 = \mathbbm{1}_{2^4} \\
    & \Pi_k \in \text{PPT}(\mathsf{A}:\mathsf{B}), \ \ \ k \in \{0,1\} \\[1ex]
\end{align*}
\end{minipage}
\begin{minipage}{0.48\textwidth}
\begin{align*}
    &\textbf{Dual Program} \\
    \textbf{minimize: } &\Tr[Y] \\
    \textbf{subject to: } &Y - Q^{T_\mathsf{B}}_{i} - (\rho_i \otimes \rho_i) / 2 \succeq 0, \ \ \ i \in \{0,1\} \\
    & Y\in \text{Herm}(\mathsf{A} \otimes \mathsf{B}) \\
    & Q_i \in \text{Pos}(\mathsf{A}, \mathsf{B}), \ \ \ i \in \{0,1\}. \\
\end{align*}
\end{minipage}
We will show that the following is a feasible solution to the dual
\begin{align*}
    Y = \frac{1}{18} \left( 9 (\rho_0 \otimes \rho_0) + 8 (\rho_1 \otimes \rho_1) \right), && Q_0 = 0 \succeq 0, && Q_1 = \frac{1}{18} \left(  (3\rho_0^{T_\mathsf{B}} \otimes 3\rho_0^{T_\mathsf{B}}) - (\rho_1^{T_\mathsf{B}} \otimes \rho_1^{T_\mathsf{B}}) \right).
\end{align*}
Note that in contrast to the optimizations for QPV$^n_{\text{SWAP}}$ protocols the matrix $Y$ is not equal to  the identity matrix with some factor, but nonetheless it is Hermitian. Secondly the eigenvectors of $3\rho_0^{T_\mathsf{B}}$ and $\rho_1^{T_\mathsf{B}}$ are both the 4 Bell states with eigenvalues $\{3/2,1/2,1/2,1/2\}$ and $\{-1/2,1/2,1/2,1/2\}$ respectively. We see for any of the 16 possible eigenvectors of $Q_1$ that the corresponding eigenvalues will be $0, \frac{1}{9}$ or $\frac{1}{18}$. We conclude that $Q_1 \succeq 0$ since all its eigenvalues are non-negative and it is Hermitian.

For $Q_0$ the first constraint in the dual program becomes:
\begin{align*}
    Y - Q^{T_\mathsf{B}}_{0} - \frac{(\rho_0 \otimes \rho_0)}{2} &= \frac{1}{18} \left( 9 (\rho_0 \otimes \rho_0) + 8 (\rho_1 \otimes \rho_1) \right) - \frac{(\rho_0 \otimes \rho_0)}{2} \\
    &= \frac{4}{9} (\rho_1 \otimes \rho_1) \succeq 0.
\end{align*}
And for $Q_1$ we get:
\begin{align*}
    Y - Q^{T_\mathsf{B}}_{1} - \frac{(\rho_1 \otimes \rho_1)}{2} &= \frac{1}{18} \left( 9 (\rho_0 \otimes \rho_0) + 8 (\rho_1 \otimes \rho_1) \right) - \frac{1}{18} \left( 9 (\rho_0 \otimes \rho_0) - (\rho_1 \otimes \rho_1) \right) - \frac{(\rho_1 \otimes \rho_1)}{2} \\
    &= 0 \succeq 0.
\end{align*}
Thus we have shown that all conditions in the dual program are met, and we get an upper bound on the success probability over all PPT measurements of
\begin{align*}
    \Tr[Y] = \Tr \left[ \frac{1}{18} \left( 9 (\rho_0 \otimes \rho_0) + 8 (\rho_1 \otimes \rho_1) \right) \right] = \frac{17}{18}.
\end{align*}
There is an LOCC strategy that attains this upper bound, namely of applying the single rounds strategy twice, where we only answer asymmetric if both pairs have unequal measurement. This strategy is correct on all input states except when as an input twice the $\ket{\Psi^+}$ input was send. This happens with probability 1/18. Thus the best attack for adversaries restricted to LOCC operations has success probability $\frac{17}{18}$, in contrast to adversaries who may use quantum communication for whom there is a perfect attack with success probability $1$ by simply swapping the second register and applying the SWAP test locally.
\end{appendices}

\end{document}